\newcommand{\sepP}{\varphi}
\newcommand{\sepQ}{\gamma}
\newcommand{\pointP}{p^*}
\newcommand{\pointQ}{q^*}
\newcommand{\total}{n}
\newcommand{\dualshell}[1]{\ensuremath{\mathbb S({#1})}}
\newcommand{\shell}[1]{\ensuremath{S({#1})}}
\newcommand{\ch}[1]{\ensuremath{\textsc{ch}(#1)}}
\newcommand{\polar}[1]{\ensuremath{\rho({#1})}}
\newcommand{\zero}{\ensuremath{\fgestruckzero}}
\newcommand{\hier}{BDK-hierarchy\xspace}
\newcommand{\hiers}{BDK-hierarchies\xspace}
\newcommand{\bounded}{$c$-bounded\xspace}
\newcommand{\phinf}[1]{\ensuremath{\textsc{ph}_{_\infty}}[#1]}
\newcommand{\phzero}[1]{\ensuremath{\textsc{ph}_{_\zero}}[#1]}
\newcommand{\cone}[2]{\ensuremath{\kappa_{#1}(#2)}}
\newcommand{\neighbors}[2]{\ensuremath{N_{#1}(#2)}}
\newcommand{\polarzero}[1]{\ensuremath{\rho_{_\zero}(#1)}}
\newcommand{\polarinf}[1]{\ensuremath{\rho_{_\infty}(#1)}}
\newcommand{\planezero}[1]{\ensuremath{#1_{_\zero}}}
\newcommand{\planeinf}[1]{\ensuremath{#1_{_\infty}}}
\newcommand{\planezeropolar}[1]{\ensuremath{\rho_{_\zero}(#1)}}
\newcommand{\planeinfpolar}[1]{\ensuremath{\rho_{_\infty}(#1)}}
\begin{document}

\title{\Large Optimal detection of intersections between convex polyhedra}

\author{Luis Barba\thanks{Department of Computer Science, ETH Z\"urich, Switzerland, \texttt{luis.barba@inf.ethz.ch}} \and Stefan Langerman\thanks{D\'epartement d'Informatique, Universit\'e Libre de Bruxelles, Brussels, Belgium, \tt{slanger@ulb.ac.be}} \thanks{Directeur de recherches du F.R.S.-FNRS.} }

\date{}

\maketitle
\begin{abstract} \small\baselineskip = 10pt
For a polyhedron $P$ in $\mathbb{R}^d$, denote by $|P|$ its combinatorial complexity, i.e., 
the number of faces of all dimensions of the polyhedra.
In this paper, we revisit the classic problem of preprocessing
polyhedra independently so that given two preprocessed polyhedra $P$
and $Q$ in $\mathbb{R}^d$, each translated and rotated, their
intersection can be tested rapidly.

For $d=3$ we show how to perform such a test in 
$O(\log |P| + \log |Q|)$ time
after linear preprocessing time and space. This running time is the best possible and improves upon the last
best known query time of $O(\log|P| \log|Q|)$ by Dobkin and Kirkpatrick (1990).

We then generalize our method to any constant dimension $d$, achieving the same optimal
 $O(\log |P| + \log |Q|)$ query time using a representation of size
 $O(|P|^{\lfloor d/2\rfloor + \varepsilon})$ for any $\varepsilon>0$ arbitrarily small.
This answers an even older question posed by Dobkin and Kirkpatrick 30
years ago.

In addition, we provide an alternative $O(\log |P| + \log |Q|)$
algorithm to test the intersection of two convex polygons $P$ and $Q$ in the plane.
\end{abstract}

\section{Introduction}

Constructing or detecting the intersection between geometric objects 
has been an important subject of study in computational
geometry. It was one of the main questions addressed in
Shamos' seminal paper that lay the grounds of computational
geometry~\cite{shamos1975geometric}, the first application of the
plane sweep technique~\cite{shamos1976geometric}, and is still the topic of several volumes
being published today. 

Extensive research has focused on finding efficient algorithms
for intersection testing or collision detection as this class of
problems has countless applications in motion planning, robotics,
computer graphics, Computer-Aided Design, VLSI design and more.
For information on collision detection refer
to surveys~\cite{jimenez20013d, lin1998collision} and to~Chapter~38 of the
Handbook of Computational Geometry~\cite{CRCHandbook2004}.


The first problem to be addressed is to compute the
intersection of two convex objects. In this paper we focus on convex
polygons and convex polyhedra (or simply polyhedra).
Let $P$ and $Q$ be two polyhedra to be tested for intersection. Let $|P|$ and $|Q|$ denote the combinatorial complexities of $P$ and $Q$, respectively, i.e., the number of faces of all dimensions of the polygon or polyhedra (vertices are 0-dimensional faces while edges are 1-dimensional faces).
Let $\total = |P| + |Q|$ denote the total complexity. 

In the plane, Shamos~\cite{shamos1975geometric} presented an optimal
$\Theta(\total)$-time algorithm to construct the intersection of a
pair of convex polygons. Another linear time algorithm was later presented by
O'Rourke et al.~\cite{O'Rourke1982384}.
In 3D space, Muller and Preparata~\cite{muller1978finding} proposed an $O(\total\log \total)$ time algorithm to test whether two polyhedra in three-dimensional space intersect. Their algorithm has a second phase which computes the intersection of these polyhedra within the same running time using geometric dualization. 
Dobkin and Kirkpatrick~\cite{dobkin1985linear} introduced a hierarchical data structure to represent a polyhedron that allows them to test if two polyhedra intersect in linear time. In a subsequent paper, Chazelle~\cite{Chazelle92anoptimal} presented an optimal linear time algorithm to compute the intersection of two polyhedra in 3D-space.

A natural extension of this problem is to consider the effect of
preprocessing on the complexity of intersection detection problems. 
In this case, significant improvements are possible in the query time. It is worth noting that each object should be preprocessed separately which allows us to work with large families of objects and to introduce new objects without triggering a reconstruction of the whole structure.

Chazelle and Dobkin~\cite{ChazelleD80,Chazelle1987} were the first to formally
define and study this class of problems and provided an algorithm
running in $O(\log|P|+\log|Q|)$ time to test the intersection of two
convex polygons $P$ and $Q$ in the plane. An alternate solution was
given by Dobkin and Kirkpatrick~\cite{dobkin1983fast} with the same
running time. Edelsbrunner~\cite{edelsbrunner1985computing} then used
that algorithm as a preprocessing phase to find the closest pair of
points between two convex polygons, within the same running time.
Dobkin and Souvaine~\cite{dobkin1991detecting} extended these algorithms to test the intersection of two convex planar regions with piecewise curved boundaries of bounded degree in logarithmic time.
These separation algorithms rely on an involved case analysis to solve
the problem. By parameterizing the boundary of $P$ and $Q$, the problem of
determining the closest pair between two polygons can
be seen as finding a minimum of a (discrete) bivariate function.  In an attempt
to simplify these algorithms, Demaine and
Langerman~\cite{unimodal} presented a detailed analysis of what
properties are sufficient in order to be able to compute a minimum of
such a function in logarithmic time.

In Section~\ref{sec:algorithm-plane}, we show an
alternate (and hopefully simpler) algorithm to determine if two convex polygons
$P$ and $Q$ intersect in $O(\log|P|+\log|Q|)$ time.

In all these 2D algorithms, preprocessing is unnecessary if the
polygon is represented by an array with the vertices of the polygon in
sorted order along its boundary.  
In 3D-space (and in higher dimensions) however, the need for
preprocessing is more evident as the traditional DCEL representation
of the polyhedron is not sufficient to perform fast queries.

In this setting, Chazelle and Dobkin~\cite{Chazelle1987} presented a
method to preprocess a 3D polyhedron
and use this structure to test if two preprocessed
polyhedra intersect in $O(\log^3 \total)$ time.
Dobkin and Kirkpatrick~\cite{dobkin1983fast} unified and extended
these results,
showing how to detect if two independently preprocessed
polyhedra intersect in $O(\log^2 \total)$ time.  
Both methods represent a polyhedron $P$ by storing parallel slices of
$P$ through each of its vertices, and thus require $O(|P|^2)$ time,
although space usage could be reduced using persistent data
structures.

In 1990, Dobkin and
Kirkpatrick~\cite{dobkin1990determining} proposed a fast query algorithm that 
uses the linear space hierarchical representation of a polyhedron $P$
defined in their previous article~\cite{dobkin1985linear}. Using this
structure, they show how to determine in $O(\log|P|\log|Q|)$ time if
the polyhedra $P$ and $Q$ intersect. They achieve this by maintaining 
the closest pair between subsets of the polyhedra $P$ and $Q$ as the
algorithms walks down the hierarchical representation. 
While a naive implementation of this algorithm could take time $\Omega(|P|+|Q|)$, 
O'Rourke~\cite[Chapter~7]{o1998computational} describes in detail 
an implementation that avoids this issue and restores the $O(\log|P|\log|Q|)$ bound.
In Section~\ref{section:Bounded degree polyhedron}, we detail the
specifics of this issue, and then we provide a simple modification of
the hierarchical representation that offers an alternative solution.

Whether the intersection of two preprocessed polyhedra $P$ and $Q$ can be tested
in $O(\log|P| + \log|Q|)$ time is an open question that was implicit
in the paper of Chazelle and Dobkin~\cite{ChazelleD80} in STOC'80, and
explicitly posed in 1983 by Dobkin and Kirkpatrick~\cite{dobkin1983fast}. 
More recently, the open problem was listed again in 2004 by David Mount in Chapter~38 of the Handbook
of Computational Geometry~\cite{CRCHandbook2004}. 
Together with this question in 3D-space, Dobkin and
Kirkpatrick~\cite{dobkin1983fast} asked if it is possible to extend
these result to higher dimensions, i.e., to independently preprocess
two polyhedra in $\mathbb{R}^d$ such that their intersection
could be tested in $O(\log \total)$ time. 

\begin{figure*}
\centering
\includegraphics{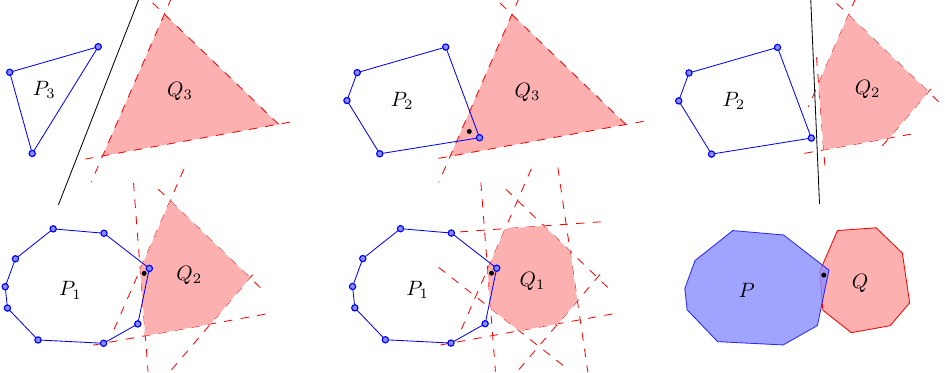}
\caption{\small In each step, the algorithm moves down in either the internal hierarchy of $P$, say $P_1, P_2, P_3$, or the external hierarchy of $Q$, say $Q_1, Q_2, Q_3$. Throughout, the polyhedron in the hierarchy of $P$ grows while the polyhedron in the hierarchy of $Q$ shrinks. A separating (black) line or an intersection (black) point is maintained in each step.}
\label{fig:DualHierarchy}
\end{figure*}

These running times are best possible as, even in the plane, testing if a point intersects a regular $m$-gon $M$ has a lower bound of $\Omega(\log m)$ in the algebraic decision tree model. 

In this paper, we match this lower bound by showing how to independently preprocess polyhedra $P$ and $Q$ in any bounded dimension such that their intersection
can be tested in $O(\log \total)$ time\footnote{In this paper, all algorithms are in the real RAM model of computation.}.
In Section~\ref{section:Bounded degree polyhedron}, we show how to preprocess a polyhedron $P\in \mathbb{R}^3$ in
linear time to obtain a linear space representation. 
In Section~\ref{section:3D algorithm} we provide an algorithm that,
given any translation and rotation of two preprocessed polyhedra $P$
and $Q$ in $\mathbb{R}^3$, tests if they intersect in $O(\log |P| + \log|Q|)$ time. 
In Section~\ref{sec:high-d} we generalize our results to any constant
dimension $d$ and show a representation that allows to test if two
polyhedra $P$ and $Q$ in $\mathbb{R}^d$ (rotated and translated)
intersect in $O(\log|P| + \log|Q|)$ time. The space required by the
representation of a polyhedron $P$ is then $O(|P|^{\lfloor d/2\rfloor + \varepsilon})$ for any small $\varepsilon>0$.
This increase in the space requirements for $d \geq 4$ is not unexpected
as the problem studied here is at least as hard as performing
halfspace emptiness queries for a set of $m$ points in $\mathbb{R}^d$. For this problem, the
best known log-query data structures use roughly $O(m^{\lfloor d/2\rfloor})$ space~\cite{matouvsek1993ray}, and super-linear space lower bounds are known for $d \geq 5$~\cite{erickson2000space}.

\subsection{Outline}
To guide the reader, we give a rough sketch of the algorithm presented in this paper, which is illustrated in Figure~\ref{fig:DualHierarchy}.

We use two types of hierarchical structures of logarithmic depth to represent a polyhedron. 
An \emph{internal hierarchy} is obtained by recursively removing ``large'' sets of the vertices of the polyhedron and taking the convex hull of the remaining vertices.
Since a polyhedron can also be seen as the intersection of halfspaces, 
an \emph{external hierarchy} can be obtained by recursively removing ``large'' sets of halfspaces and taking the intersection of the remaining halfspaces. (A similar structure was introduced by Dobkin et al. to test how ``deeply'' two polyhedra intersect~\cite{dobkin1993computing}). Thus, at the top of these hierarchies we store constant size polyhedra, while at the bottom the full polyhedra are stored.

To test two preprocessed polyhedra $P$ and $Q$ for intersection, 
we use an inner hierarchy for $P$ and an external hierarchy for $Q$. 
Starting at the top, we make our way down by moving one step at the time in either hierarchy.
We move down in the hierarchy of $P$ by adding more vertices (which increases its size), while we move down in the hierarchy of $Q$ by adding halfspace constraints (which decreases its size). 
Thus, in our algorithm one polyhedron grows while the other shrinks, whereas previous approaches grew both polyhedra simultaneously.
Additionally, we maintain either a separating plane or an intersection point while moving down in these hierarchies. This allows us to determine the intersection of the polyhedra after reaching the bottom of the hierarchies.

The algorithm described in Section~\ref{sec:algorithm-plane} directly implements this idea
to test the intersection of two convex polygons in the plane.

For technical reasons, to capture this intuition in higher dimensions we make use of the polar transformation (see Section~\ref{sec:Polar transformation}). This operation maps a polyhedron in a primal space into a dual polyhedron in a polar space.
Moreover, this transformation maps the inner hierarchy of a polyhedron into the external hierarchy of its dual counterpart. Consequently, being able to construct inner hierarchies is sufficient.
To test the intersection of two preprocessed polyhedra, our algorithm switches back and forth between a primal and a polar space while moving down in the hierarchies of these polyhedra.

\section{Algorithm in the plane}\label{sec:algorithm-plane}

\newcommand{\tp}{\ensuremath{\mathcal T_P}}
\newcommand{\tq}{\ensuremath{\mathcal T_Q}}

Let $P$ and $Q$ be two convex polygons in the plane with $n$ and $m$ vertices, respectively. 
We assume that a convex polygon is given as an array with the sequence of its vertices sorted in clockwise order along its boundary. Let $V(P)$ and $E(P)$ be the set of vertices and edges of $P$, respectively. 
Let $\partial P$ denote the boundary of $P$. Analogous definitions apply for $Q$.
As a warm-up, we describe an algorithm to determine if $P$ and $Q$ intersect whose running time is $O(\log n + \log m)$. Even though algorithms with these running time already exists in the literature, they require an involved case analysis whereas our approach avoids them and is arguably easier to implement. Moreover, it provides some intuition for the higher-dimension algorithms presented in subsequent sections.

For each edge $e\in E(Q)$, its \emph{supporting halfplane} is the halfplane containing $Q$ supported by the line extending $e$. 
Given a subset of edges $F\subseteq E(Q)$, the \emph{edge hull} of $F$ is the intersection of the supporting halfplanes of each of the edges in $F$.
Throughout the algorithm, we consider a triangle $\tp$ being the convex hull of three vertices of $P$ and a triangle (possibly unbounded) $\tq$ defined as the edge hull of three edges of $Q$; see Figure~\ref{fig:2DAlgorithm} for an illustration. Notice that $\tp\subseteq P$ while $Q\subseteq \tq$.

Intuitively, in each round the algorithm compares $\tp$ and $\tq$ for intersection and, depending on the output, prunes a fraction either of the vertices of $P$ or of the edges of $Q$. Then, the triangles $\tp$ and $\tq$ are redefined should there be a subsequent round of the algorithm. 

Let $V^*(P)$ and $E^*(Q)$ respectively be the sets of vertices and edges of $P$ and $Q$ remaining after the pruning steps performed so far by the algorithm. Initially, $V^*(P) = V(P)$ while $E^*(Q) = E(Q)$.
After each pruning step, we maintain the \emph{correctness invariant} which states that an intersection between $P$ and $Q$ can be computed with the remaining vertices  and edges after the pruning. That is, $P$ and $Q$ intersect if and only if $\ch{V^*(P)}$ intersects an edge of $E^*(Q)$, where $\ch{V^*(P)}$ denotes the convex hull of $V^*(P)$.

For a given polygonal chain, its  \emph{vertex-median} is a vertex whose removal splits this chain into two pieces that differ by at most one vertex. In the same way, the \emph{edge-median} of this chain is the edge whose removal splits the chain into two parts that differ by at most one edge.

\subsection*{The 2D algorithm}

To begin with, define $\tp$ as the convex hull of three vertices whose removal splits the boundary of $P$ into three chains, each with at most $\lceil (n-3)/3 \rceil$ vertices. In a similar way, define $\tq$ as the edge hull of three edges of $Q$ that split its boundary into three polygonal chains each with at most $\lceil (m-3)/3 \rceil$ edges; see Figure~\ref{fig:2DAlgorithm}.

A line \emph{separates} two convex polygons if they lie in opposite closed halfplanes supported by this line.
After each round of the algorithm, we maintain one of the two following invariants: The \emph{separation invariant} states that we have a line $\ell$ that separates $\tp$ from $\tq$ such that $\ell$ is tangent to $\tp$ at a vertex $v$. 
The \emph{intersection invariant} states that we have a point in the intersection between $\tp$ and $\tq$. 
Note that at least one of among separation and the
intersection invariant must hold, and they only hold at the same time when $\tp$ is tangent to $\tq$.
The algorithm performs two different tasks depending on which of the two invariants holds (if both hold, we choose a task arbitrarily).

\subsection*{Separation invariant.} If the separation invariant holds, then there is a line $\ell$ that separates $\tp$ from $\tq$ such that $\ell$ is tangent to $\tp$ at a vertex $v$. Let $\ell^-$ be the closed halfplane supported by $\ell$ that contains $\tp$ and let $\ell^+$ be its complement. 

Consider the two neighbors $n_v$ and $n'_v$ of $v$ along the boundary of $P$. Because $P$ is a convex polygon, if both $n_v$ and $n'_v$ lie in~$\ell^-$, then we are done as $\ell$ separates $P$ from $\tq\supseteq Q$.
Otherwise, by the convexity of $P$, either $n_v$ or $n'_v$ lies in $\ell^+$ but not both. Assume without loss of generality that $n_v\in \ell^+$ and notice that the removal of the vertices of $\tp$ split $\partial P$ into three polygonal chains. In this case, we know that only one of these chains, say $c_v$, intersects $\ell^+$. Moreover, we know that $v$ is an endpoint of $c_v$ and we denote its other endpoint by $u$.

Because $Q$ is contained in $\ell^+$, only the vertices in $c_v$ can define an intersection with $Q$.
Therefore, we prune $V^*(P)$ by removing every vertex of $P$ that does not lie on $c_v$ and maintain the correctness invariant.
We redefine $\tp$ as the convex hull of $v,u$ and the vertex-median of $c_v$.
With the new $\tp$, we can test in $O(1)$ time if $\tp$ and $\tq$ intersect. If they do not, then we can compute a new line that separates $\tp$ from $\tq$ and preserve the separation invariant.
Otherwise, if $\tp$ and $\tq$ intersect, then we establish the intersection invariant and proceed to the next round of the algorithm.

\subsection*{Intersection invariant.} If the intersection invariant holds, then $\tp\cap \tq\neq \emptyset$. 
In this case, let $e_1, e_2$ and $e_3$ be the three edges whose edge hull defines $\tq$.
Notice that if $\tp\subseteq P$ intersects $\ch{e_1, e_2, e_3}\subseteq Q$, then $P$ and $Q$ intersect and the algorithm finishes.
Otherwise, there are three disjoint connected components in $\tq\setminus \ch{e_1, e_2, e_3}$ and $\tp$ intersects exactly one of them; see Figure~\ref{fig:2DAlgorithm}. 
Assume without loss of generality that $\tp$ intersects the component
bounded by the lines extending $e_1$ and $e_2$ and let~$x$ be a point
on the boundary of $\tq$ in this intersection.
Let $C$ be the polygonal chain that connects $e_1$ with $e_2$ along $\partial Q$ such that $C$ passes through $e_3$. 
We claim that to test if $P$ and $Q$ intersect, we need only to consider the edges on $\partial Q\setminus C$. To prove this claim, notice that if $P$ intersects $C$ at a point $y$, then the edge $xy$ is contained in $Q$. Because $x$ and $y$ lie in two disjoint connected components of $\tq\setminus \ch{e_1, e_2, e_3}$, the edge $xy$ also intersects $\partial Q$ at another point lying on $\partial Q \setminus C$. Therefore, an intersection between $P$ and $Q$ will still be identified even if we ignore every edge on~$C$. 
That is, $P$ and $Q$ intersect if and only if $P$ and $\partial Q \setminus C$ intersect.
Thus, we can prune $E^*(Q)$ by removing every edge along $C$ while preserving the correctness invariant. 
After the pruning step, we redefine $\tq$ as the edge hull of $e_1, e_2$ and the edge-median of the remaining edges of $E(Q)$ after the pruning.

If $\tp$ intersects $\tq$ after being redefined, then the intersection invariant is preserved an we proceed to the next round of the algorithm.
Otherwise, if $\tp$ does not intersect $\tq$, then we we can compute in $O(1)$ time a line $\ell$ tangent to $\tp$ that separates $\tp$ from $\tq$. That is, the separation invariant is reestablished should there be a subsequent round of the algorithm.

\begin{figure}[t]
\centering
\includegraphics{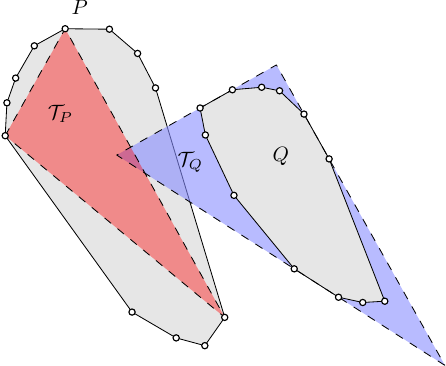}
\caption{\small Two convex polygons $P$ and $Q$ and the triangles $\tp$ and $\tq$ such that $\tq\subseteq P$ and $Q\subseteq \tq$. Moreover, $\tq\setminus Q$ consists of three connected components.}
\label{fig:2DAlgorithm}
\end{figure}

\begin{theorem}
Let $P$ and $Q$ be two convex polygons with $n$ and $m$ vertices, respectively.
The 2D-algorithm determines if $P$ and $Q$ intersect in $O(\log n + \log m)$ time.
\end{theorem}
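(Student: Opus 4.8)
\emph{Proof plan.} Two statements must be established: that the 2D-algorithm always halts with the correct answer, and that it uses $O(\log n + \log m)$ time. Both rest on the \emph{correctness invariant} --- that $P$ and $Q$ intersect iff $\ch{V^*(P)}$ meets some edge of $E^*(Q)$ --- together with the fact that every round strips a constant fraction off one of the two working sets.

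For correctness, I would first dispose of the containment cases: in $O(\log n + \log m)$ time test whether some vertex of $P$ lies in $Q$ and whether some vertex of $Q$ lies in $P$, reporting ``intersect'' if either does. In the remaining case neither polygon contains the other, so $P \cap Q \neq \emptyset$ iff $\partial P$ crosses $\partial Q$ iff $\ch{V(P)}$ meets an edge of $E(Q)$ --- which is exactly the correctness invariant at the start. I would then check that each pruning step preserves it. For the separation step this uses the fact that the supporting halfplane of \emph{every} edge of $Q$ contains $Q$, so $\tq \supseteq Q$ throughout and hence $Q \subseteq \ell^{+}$ (the halfplane not containing $\tp$); therefore any point witnessing $\ch{V^*(P)} \cap E^*(Q) \neq \emptyset$ lies in $\ell^{+}$, and since $c_v$ is the only one of the (at most) three chains into which the vertices of $\tp$ split $\partial(\ch{V^*(P)})$ that reaches $\ell^{+}$, the witness already lies in the convex hull of $c_v$, i.e.\ in the redefined $\ch{V^*(P)}$; the reverse direction is immediate from $\ch{V^*(P)} \subseteq P$. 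For the intersection step I would reuse the chain-crossing argument of the text: an intersection of $P$ with the pruned chain $C$ at a point $y$, together with a point $x \in \tp \cap \tq$ in a different component of $\tq \setminus \ch{e_1,e_2,e_3}$, forces the segment $xy \subseteq Q$ to cross $\partial Q \setminus C$, so the intersection survives. Finally I would verify the three halting rules: both neighbors of $v$ in $\ell^{-}$ gives a line $\ell$ with $P \subseteq \ell^{-}$ and $Q \subseteq \tq \subseteq \ell^{+}$, so (under general position, or after one more convex-polygon query to settle a boundary contact) $P$ and $Q$ are disjoint; $\tp$ meeting $\ch{e_1,e_2,e_3}$ exhibits a point of $\tp \cap \ch{e_1,e_2,e_3} \subseteq P \cap Q$; and once $\min(|V^*(P)|, |E^*(Q)|)$ drops below a fixed constant, the correctness invariant turns the remaining question into a test of a constant-complexity convex set against a convex polygon or a convex chain, solvable directly in $O(\log n + \log m)$ time.

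For the running time I would maintain the structural invariant that $V^*(P)$ is always a contiguous arc of the cyclic vertex list of $P$ whose two endpoints, together with its vertex-median, are the three vertices of $\tp$ (and symmetrically for $E^*(Q)$ and $\tq$); this is preserved by the redefinition rules. Storing these arcs by their endpoint indices into the input arrays, each round's primitives --- locating a median, testing whether two triangles intersect, computing a separating tangent of two disjoint triangles --- cost $O(1)$. The crucial point is that a non-terminal round always prunes, and that the part it removes (the chain $C$ in the intersection step, the complement of $c_v$ in the separation step) is at least a constant fraction of the affected arc --- in the later rounds it contains one of the two sub-chains cut off by the relevant median --- so that arc shrinks to at most a $\tfrac23$-fraction of its length plus $O(1)$. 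Hence each arc can be pruned only $O(\log n)$, respectively $O(\log m)$, times before the base case triggers, and since every round prunes exactly one arc there are $O(\log n) + O(\log m)$ rounds in total; adding the preliminary and base-case costs gives the bound.

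The part I expect to be most delicate is the invariant preservation for the intersection step: one must argue rigorously that $\tq \setminus \ch{e_1,e_2,e_3}$ consists of exactly the three corners of the triangle $\tq$ chopped off by the chords of $\ch{e_1,e_2,e_3}$, and that any connected set meeting two of those corners must also meet $\ch{e_1,e_2,e_3}$, so that the dichotomy ``$\tp$ meets $\ch{e_1,e_2,e_3}$, or $\tp$ meets exactly one corner'' is exhaustive; and one must confirm that the constant-fraction shrinkage really recurs round after round, which is the purpose of the contiguous-arc/median bookkeeping above. Degenerate configurations (tangencies, collinear triples, shared boundary points) are routine and I would absorb them into a general-position assumption or dispatch them with $O(1)$ side tests.
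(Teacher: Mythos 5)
Your proof follows the same route as the paper's: the correctness invariant, the observation that each round removes a constant fraction of one of the two working sets using only $O(1)$-time primitives (hence $O(\log n)+O(\log m)$ rounds), and a constant-time base case. Your added preliminary containment test is a sensible refinement rather than a deviation --- note that locating one \emph{fixed} vertex of each polygon in the other suffices (testing whether ``some'' vertex of $P$ lies in $Q$ is not a logarithmic-time primitive), and it is genuinely needed to make the correctness invariant hold initially when $P$ lies strictly in the interior of $Q$; otherwise the two arguments match step for step.
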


\begin{proof}
Each time we redefine $\tp$, we take three vertices that split the remaining vertices of $V^*(P)$ into two chains of roughly equal length along $\partial P$. 
Therefore, after each round where the separation invariant holds, we prune a constant fraction of the vertices of $V^*(P)$. That is, the separation invariant step of the algorithm can be performed at most $O(\log n)$ times. 

Each time $\tq$ is redefined, we take three edges that split the remaining edges along the boundary of $Q$ into equal pieces. Thus, we prune a constant fraction of the edges of $E^*(Q)$ after each round where the intersection invariant holds. Hence, this can be done at most $O(\log m)$ times before being left with only three edges of $Q$. 
Furthermore, the correctness invariant is maintained after each of the pruning steps.

Thus, if the algorithm does not find a separating line or an intersection point, then after $O(\log n + \log m)$ steps, $\tp$ consists of the only three vertices left in $V^*(P)$ while $\tq$ consist of the only three  edges remaining from $E^*(Q)$. 
If $e_1, e_2$ and $e_3$  are the edges whose edge hull defines $\tq$, then by the correctness invariant we know that $P$ and $Q$ intersect if and only if $\tp$ intersects either $e_1, e_2$ or $e_3$.
Consequently, we can test them for intersection in $O(1)$ time and determine if $P$ and $Q$ intersect.
\end{proof}

\section{The polar transformation}\label{sec:Polar transformation}

Let $\zero$ be the \emph{origin} of $\mathbb{R}^d$, i.e., the point with $d$ coordinates equal to zero.
Throughout this paper, a \emph{hyperplane} $h$ is a $(d-1)$-dimensional affine space in $\mathbb{R}^d$ such that for some $z\in \mathbb{R}^d$, $h = \{x\in \mathbb{R}^d : \langle z, x\rangle = 1\}$, where $\langle*, *\rangle$ represents the interior product of Euclidean spaces. Therefore, in this paper a hyperplane does not contain the origin.
A \emph{halfspace} is the closure of either of the two parts into which a hyperplane divides $\mathbb{R}^d$, i.e., a halfspace contains the hyperplane defining its boundary.

Given a point $x \in \mathbb{R}^d$, we define its \emph{polar} to be the hyperplane $\polar{x}  = \{y\in \mathbb{R}^d : \langle x, y\rangle = 1\}$. Given a  hyperplane $h$ in $\mathbb{R}^d$, we define its \emph{polar} $\polar{h}$ as the point $z\in \mathbb{R}^d$ such that $h = \{y\in \mathbb{R}^d : \langle z, y\rangle = 1\}$.
Let $\planezeropolar{x} = \{y\in \mathbb{R}^d : \langle x, y\rangle \leq 1\}$ and $\planeinfpolar{x} = \{y\in \mathbb{R}^d : \langle x, y\rangle \geq 1\}$ be the two halfspaces supported by $\polar{x}$, where $\zero\in \planezeropolar{x}$ while $\zero\notin \planeinfpolar{x}$.
In the same way, $\planezero{h}$ and $\planeinf{h}$ denote the halfspaces supported by $h$ such that $\zero \in \planezero{h}$ while $\zero\notin \planeinf{h}$.

Note that the polar of a point $x\in \mathbb{R}^d$ is a hyperplane whose polar is equal to $x$, i.e., the polar operation is self-inverse (for more information on this transformation see Section 2.3 of~\cite{ziegler1995lectures}).
Given a set of points (or hyperplanes), its \emph{polar set} is the set containing the polar of each of its elements.
The following result is illustrated in Figure~\ref{fig:PolarExamples}$(a)$.

\begin{lemma}\label{lemma:polarity property}
Let $x$ and $h$ be a point and a hyperplane in $\mathbb{R}^d$, respectively.
Then, $x\in \planezero{h}$ if and only if $\polar{h}\in \planezeropolar{x}$. Also, $x\in \planeinf{h}$ if and only if $\polar{h}\in \planeinfpolar{x}$. Moreover,  $x\in h$ if and only if $\polar{h} \in \polar{x}$.
\end{lemma}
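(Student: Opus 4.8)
The plan is to reduce the whole statement to the definition of the polar together with the symmetry of the Euclidean inner product. First I would set $z := \polar{h}$, so that by definition $h = \{y \in \mathbb{R}^d : \langle z, y\rangle = 1\}$ and $z$ is the (unique, by self-inversity of the polar) point with this property. Since $\langle z, \zero\rangle = 0 \le 1$, the halfspace of $h$ that contains the origin is $\planezero{h} = \{y : \langle z, y\rangle \le 1\}$, and hence $\planeinf{h} = \{y : \langle z, y\rangle \ge 1\}$. This is the only place where the labeling convention (which halfspace is ``$\planezero{}$'' and which is ``$\planeinf{}$'') enters, and it matches the convention $\zero \in \planezeropolar{x}$, $\zero \notin \planeinfpolar{x}$ used on the $\polar{x}$ side.

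Next I would prove the three equivalences by a one-line unfolding each. For the first: $x \in \planezero{h}$ holds iff $\langle z, x\rangle \le 1$; by symmetry of $\langle*,*\rangle$ this is $\langle x, z\rangle \le 1$, which by definition of $\planezeropolar{x}$ says $z \in \planezeropolar{x}$, i.e.\ $\polar{h} \in \planezeropolar{x}$. The second equivalence is verbatim the same argument with $\le$ replaced by $\ge$ and $\planezero{\cdot}, \planezeropolar{\cdot}$ replaced by $\planeinf{\cdot}, \planeinfpolar{\cdot}$. The third is again the same with both inequalities replaced by equalities: $x \in h$ iff $\langle z, x\rangle = 1$ iff $\langle x, z\rangle = 1$ iff $z \in \polar{x}$ iff $\polar{h} \in \polar{x}$. (Alternatively, the third follows formally by intersecting the two halfspace statements, since $h = \planezero{h}\cap\planeinf{h}$ and $\polar{x} = \planezeropolar{x}\cap\planeinfpolar{x}$.)

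There is essentially no obstacle here: the lemma is a direct consequence of definitions, and the only point that deserves an explicit sentence is that $\polar{h}$ is well defined (using that the polar operation is self-inverse, as noted just before the lemma) so that ``the'' point $z$ with $h = \{y : \langle z,y\rangle = 1\}$ makes sense, and that the sign convention $\langle z,\zero\rangle = 0 < 1$ pins down which of the two halfspaces of $h$ is $\planezero{h}$ consistently with the analogous convention for $\polar{x}$.
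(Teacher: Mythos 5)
Your proposal is correct and follows essentially the same route as the paper's proof: unfold the definition of $\polar{h}$, observe that $\planezero{h}=\{y:\langle \polar{h},y\rangle\leq 1\}$, and use the symmetry of the inner product to transfer the inequality to the statement $\polar{h}\in\planezeropolar{x}$, with the other two equivalences handled analogously. Your extra remark pinning down the sign convention via $\langle \polar{h},\zero\rangle=0\leq 1$ is a small, welcome clarification of a step the paper simply asserts with ``Recall that.''
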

\begin{proof}
Recall that $\planezero{h} = \{y\in \mathbb{R}^d : \langle y, \polar{h}\rangle  \leq 1\}$.
Then,  $x\in \planezero{h}$ if and only if $\langle x, \polar{h}\rangle \leq 1$. 
Furthermore, $\langle x, \polar{h}\rangle \leq 1$ if and only if $\polar{h}\in \planezeropolar{x} = \{y\in \mathbb{R}^d : \langle y, x\rangle \leq 1\}$. That is, $x\in \planezero{h}$ if and only if $\polar{h}\in \planezeropolar{x}$. Analogous proofs hold for the other statements.
\end{proof}

A polyhedron is a convex region in the $d$-dimensional space being the non-empty intersection of a finite set of halfspaces.
Given a set of hyperplanes $S$ in $\mathbb{R}^d$, let $\phinf{S} = \cap_{h\in S} \planeinf{h}$ and $\phzero{S} = \cap_{h\in S} \planezero{h}$ be two polyhedra defined by $S$.
Let $P\subset \mathbb{R}^d$ be a polyhedron. 
Let $V(P)$ denote the set of vertices of $P$ and let $\shell{P}$ be the set of hyperplanes that extend the $(d-1)$-dimensional faces of $P$.
Therefore, if $P$ is bounded, then it can be seen as the convex hull of $V(P)$, denoted  by $\ch{V(P)}$. Moreover, if $P$ contains the origin, then $P$ can be also seen as $\phzero{\shell{P}}$.

To \emph{polarize} $P$, let $\dualshell{P}$ be the polar set of $V(P)$, i.e., the set of hyperplanes being the polars of the vertices of $P$. Therefore, we can think of $\phzero{\dualshell{P}}$ and $\phinf{\dualshell{P}}$ as the possible polarizations of $P$. 
For ease of notation, we let $\polarzero{P}$ and $\polarinf{P}$ denote the polyhedra $\phzero{\dualshell{P}}$ and $\phinf{\dualshell{P}}$, respectively. Note that $P$ contains the origin if and only if $\polarinf{P} = \emptyset$ and $\polarzero{P}$ is bounded.

\begin{lemma}\label{lemma:Polar of the polar}
(Clause $(v)$ of Theorem 2.11 of~\cite{ziegler1995lectures})
Let $P$ be a polyhedron in $\mathbb{R}^d$ such that $\zero \in P$. Then, $\polarzero{\polarzero{P}} = P$.
\end{lemma}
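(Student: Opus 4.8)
The plan is to prove Lemma~\ref{lemma:Polar of the polar} directly from Lemma~\ref{lemma:polarity property}, using the two descriptions of a polyhedron containing the origin that were set up in the preceding paragraphs: namely that $P = \phzero{\shell{P}}$ and that $\polarzero{P} = \phzero{\dualshell{P}}$ where $\dualshell{P}$ is the polar set of $V(P)$. The key observation is that the polar operation on individual points and hyperplanes is self-inverse (stated in the excerpt after the definition of $\rho$), and that Lemma~\ref{lemma:polarity property} converts membership of a point in the zero-side halfspace of a hyperplane into membership of the hyperplane's polar in the zero-side halfspace of the point's polar.

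First I would unfold the definition $\polarzero{\polarzero{P}} = \phzero{\dualshell{\polarzero{P}}}$, so the task reduces to understanding the vertices of $\polarzero{P}$. The vertices of $\polarzero{P} = \phzero{\dualshell{P}}$ correspond, under polarity, to the $(d-1)$-dimensional faces of $P$: a vertex $w$ of $\polarzero{P}$ is the intersection point of $d$ hyperplanes from $\dualshell{P}$ that are polars of vertices of $P$ spanning a facet, and $\polar{w}$ is precisely the hyperplane extending that facet of $P$. Thus $\dualshell{\polarzero{P}}$, the polar set of $V(\polarzero{P})$, is exactly $\shell{P}$. This identification is the crux of the argument, and I expect it to be the main obstacle, since it requires a careful incidence/duality bookkeeping argument: one must check that polarity sends the vertex-facet incidences of $P$ to the facet-vertex incidences of $\polarzero{P}$ and that no spurious vertices appear. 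I would lean on the self-inverse property of $\rho$ together with Lemma~\ref{lemma:polarity property} applied facet-by-facet to make this precise, or alternatively cite the relevant clause of Theorem 2.11 of~\cite{ziegler1995lectures} for the face-lattice anti-isomorphism.

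Once $\dualshell{\polarzero{P}} = \shell{P}$ is established, the conclusion is immediate:
\[
\polarzero{\polarzero{P}} = \phzero{\dualshell{\polarzero{P}}} = \phzero{\shell{P}} = P,
\]
where the last equality is the description of $P$ as $\phzero{\shell{P}}$, valid because $\zero \in P$. To make this last step fully rigorous without invoking the face-lattice machinery, I would instead argue the set equality $\phzero{\dualshell{\polarzero{P}}} = P$ pointwise: a point $x$ lies in $\polarzero{\polarzero{P}}$ iff $\langle x, w\rangle \le 1$ for every vertex $w$ of $\polarzero{P}$, iff (by Lemma~\ref{lemma:polarity property}, reading $w = \polar{h}$) $x \in \planezero{h}$ for every $h \in \shell{P}$, iff $x \in \phzero{\shell{P}} = P$. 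The only delicate point in this pointwise version is that it suffices to test the vertices of $\polarzero{P}$ rather than all its points, which holds because $\polarzero{P}$ is bounded — a consequence of $\zero \in P$ noted in the excerpt — so it equals the convex hull of its vertices, and $\langle x, \cdot\rangle \le 1$ is preserved under convex combinations.
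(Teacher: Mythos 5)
Your proposal is correct in outline, but note first that the paper does not prove this lemma at all: it is imported verbatim as clause $(v)$ of Theorem 2.11 of Ziegler, so any argument you supply is by construction a different route, and supplying one is worthwhile since the paper's $\polarzero{\cdot}$ is defined via the vertex set rather than as the full polar body. Your reduction of the biduality to the identity $\dualshell{\polarzero{P}} = \shell{P}$ is sound, and you correctly locate the real content in that vertex--facet correspondence. Two remarks, though. First, you are doing more work than necessary: the inclusion $P \subseteq \polarzero{\polarzero{P}}$ needs no face-lattice bookkeeping, since for $x\in P$ Lemma~\ref{lemma:Result for Shell-simplices} (whose proof relies only on Lemmas~\ref{lemma:polarity property} and~\ref{lemma:Polarizations of polyhedra}, so there is no circularity in invoking it here despite its later placement) gives $\polarzero{P}\subseteq \planezeropolar{x}$, hence $\langle x, w\rangle\le 1$ for every $w\in V(\polarzero{P})$, which is exactly membership in $\phzero{\dualshell{\polarzero{P}}}$. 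Only the reverse inclusion needs duality, and there only one half of your identification is used, namely that every $h\in\shell{P}$ has $\polar{h}\in V(\polarzero{P})$: that $\polar{h}\in\polarzero{P}$ is immediate from Lemma~\ref{lemma:polarity property} (each vertex of $P$ lies in $\planezero{h}$, so $\polar{h}\in\planezeropolar{v}$ for each $v\in V(P)$), and that it is a \emph{vertex} follows because the facet supported by $h$ contains $d$ affinely independent vertices of $P$, whose $d$ constraints are tight and independent at $\polar{h}$. The ``no spurious vertices'' half of your identification, which you flag as delicate, can be skipped entirely. Second, a caveat: the boundedness of $\polarzero{P}$ that you invoke, and the clean facet-to-vertex correspondence, require $\zero$ to lie in the \emph{interior} of $P$, not merely in $P$ as the lemma is stated; if $\zero\in\partial P$ then $\polarzero{P}$ is unbounded and is not the convex hull of its vertices. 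The paper only ever applies the lemma after translating the origin into the interior, so nothing downstream breaks, but a self-contained proof should state the hypothesis it actually uses.
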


\begin{figure*}
\centering
\includegraphics{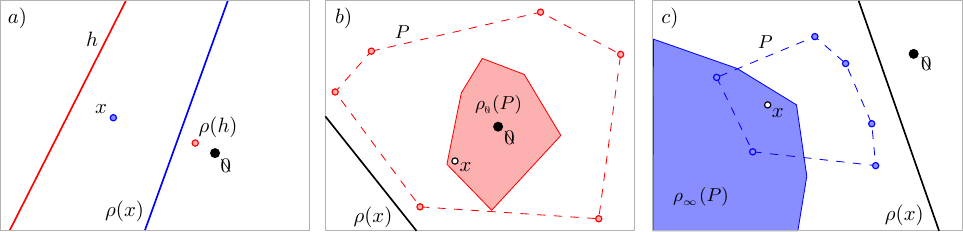}
\caption{\small $a)$ The situation described in Lemma~\ref{lemma:polarity property}. $b)$ A polygon $P$ containing the origin and its polarization $\polarzero{P}$. The first statement of Lemma~\ref{lemma:Polarizations of polyhedra} is depicted. $c)$ A polygon $P$ that does not contains the origin and its polarization $\polarinf{P}$. The second statement of Lemma~\ref{lemma:Polarizations of polyhedra} is also depicted.}
\label{fig:PolarExamples}
\end{figure*}

As a consequence of Lemma~\ref{lemma:polarity property} we obtain the following result depicted in Figures~\ref{fig:PolarExamples}$(b)$ and~\ref{fig:PolarExamples}$(c)$.

\begin{lemma}\label{lemma:Polarizations of polyhedra}
Let $P$ be a polyhedron in $\mathbb{R}^d$ and let $x\in \mathbb{R}^d$.
Then,  $x\in \polarzero{P}$ if and only if~$P\subseteq \planezeropolar{x}$.
Moreover,  $x\in \polarinf{P}$ if and only if $P\subseteq \planeinfpolar{x}$.
\end{lemma}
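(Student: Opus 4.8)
\emph{Proof proposal.}
The plan is to unwind both equivalences directly from the definition of $\polarzero{P}$ and $\polarinf{P}$ as intersections of halfspaces taken over the vertices of $P$, and then to feed each resulting membership condition through Lemma~\ref{lemma:polarity property} in order to interchange the roles of the point $x$ and of the polar hyperplanes $\polar{v}$, $v\in V(P)$. For the first statement I would first observe that, for any vertex $v$, the halfspace of the polar hyperplane $\polar{v}$ that contains the origin is precisely $\planezero{\polar{v}}=\{y\in\mathbb{R}^d:\langle v,y\rangle\le 1\}=\planezeropolar{v}$, so that $\polarzero{P}=\phzero{\dualshell{P}}=\bigcap_{v\in V(P)}\planezeropolar{v}$. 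Hence $x\in\polarzero{P}$ holds if and only if $x\in\planezeropolar{v}$ for every $v\in V(P)$. Applying Lemma~\ref{lemma:polarity property} to the point $x$ and the hyperplane $\polar{v}$ (whose polar is $v$, since the polar operation is self-inverse) rewrites each of these conditions as $v\in\planezeropolar{x}$, so $x\in\polarzero{P}$ if and only if $V(P)\subseteq\planezeropolar{x}$. Finally, since $\planezeropolar{x}$ is a closed halfspace, hence convex, and $P=\ch{V(P)}$, containing every vertex is the same as containing their convex hull; that is, $V(P)\subseteq\planezeropolar{x}$ if and only if $P\subseteq\planezeropolar{x}$.

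The second statement follows by the same three steps, with $\planezero{\polar{v}}$ replaced by $\planeinf{\polar{v}}$ and $\planezeropolar{x}$ replaced by $\planeinfpolar{x}$: here $\planeinf{\polar{v}}=\{y\in\mathbb{R}^d:\langle v,y\rangle\ge 1\}=\planeinfpolar{v}$, so $\polarinf{P}=\bigcap_{v\in V(P)}\planeinfpolar{v}$; the corresponding clause of Lemma~\ref{lemma:polarity property} converts $x\in\planeinf{\polar{v}}$ into $v\in\planeinfpolar{x}$; and convexity of the halfspace $\planeinfpolar{x}$ together with $P=\ch{V(P)}$ lifts the condition from $V(P)$ to all of $P$.

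None of the steps is technically deep, so there is no real obstacle; the two points that need care are (i) keeping the orientation conventions consistent, so that $\planezero{\polar{v}}$ is matched with $\planezeropolar{v}$ and $\planeinf{\polar{v}}$ with $\planeinfpolar{v}$ rather than the two being swapped, and (ii) the implication $V(P)\subseteq H\Rightarrow P\subseteq H$, which rests on $P=\ch{V(P)}$ and hence on $P$ being bounded, the setting of interest here. If one wishes to allow the origin itself to be a vertex of $P$ (so that $\polar{v}$ is undefined for that $v$), it is cleanest to treat this as a separate case: then $\zero\in P$, which forces $\polarinf{P}=\emptyset$ and makes $P\subseteq\planeinfpolar{x}$ impossible for every $x$, so the second equivalence holds vacuously, while in the first equivalence one simply intersects over the remaining vertices, which does not change the condition since $\zero\in\planezeropolar{x}$ always.
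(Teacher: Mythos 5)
Your proposal is correct and follows essentially the same route as the paper's proof: unwind $\polarzero{P}$ (resp.\ $\polarinf{P}$) as an intersection of halfspaces over the vertices of $P$, apply Lemma~\ref{lemma:polarity property} vertex by vertex to swap the roles of $x$ and $v$, and pass from $V(P)$ to $P$ via convexity of the halfspace. You are somewhat more explicit than the paper about the final step requiring $P=\ch{V(P)}$ (boundedness) and about orientation conventions, but the argument is the same.
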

\begin{proof}
Let $x$ be a point in $\polarzero{P}$. 
Notice that for every hyperplane $s\in \dualshell{P}$, $x\in \planezero{s}$.
Therefore, by Lemma~\ref{lemma:polarity property} we know that the vertex $\polar{s}\in V(P)$ lies in $\planezeropolar{x}$. Consequently, every vertex of $P$ lies in $\planezeropolar{x}$, i.e., $P\subseteq \planezeropolar{x}$.

On the other direction, let $v$ be a vertex of  $P$, i.e., $\polar{v} \in \dualshell{P}$. If $v\in \planezeropolar{x}$, then by Lemma~\ref{lemma:polarity property} $x\in \planezeropolar{v}$. 
Therefore, for every $\polar{v}\in \dualshell{P}$, we know that $x\in \planezeropolar{v}$, i.e., $x\in \polarzero{P}$.

The same proof holds for the second statement by replacing all instances of $\zero$ by $\infty$.
\end{proof}

In the case that $\zero\in P$, $\polarinf{P}$ is empty and the second conclusion of the previous lemma holds trivially. Thus, even though the previous result is always true, it is non-trivial only when $\zero\notin P$.

\begin{lemma}\label{lemma:Result for Shell-simplices}
Let $P$ be a polyhedron in $\mathbb{R}^d$. 
If $x\in P$, then $\polarzero{P}\subseteq \planezeropolar{x}$ while $\polarinf{P} \subseteq \planeinfpolar{x}$.
\end{lemma}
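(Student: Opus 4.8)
The plan is to reduce the statement to Lemma~\ref{lemma:Polarizations of polyhedra} by exploiting the symmetry of the bilinear form $\langle\cdot,\cdot\rangle$. The key observation is that for two points $x,y\in\mathbb{R}^d$ the relation $y\in\planezeropolar{x}$ means precisely $\langle x,y\rangle\le 1$, which is the very same condition as $x\in\planezeropolar{y}$; similarly $y\in\planeinfpolar{x}$ iff $\langle x,y\rangle\ge 1$ iff $x\in\planeinfpolar{y}$. Thus each of the two inclusions I must establish is equivalent, point by point, to a ``swapped'' membership statement about $x$, and those swapped statements are exactly what Lemma~\ref{lemma:Polarizations of polyhedra} delivers.

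First I would fix an arbitrary $y\in\polarzero{P}$ and apply Lemma~\ref{lemma:Polarizations of polyhedra} with the point $y$ in the role of $x$, obtaining $P\subseteq\planezeropolar{y}$. Since $x\in P$ by hypothesis, this gives $x\in\planezeropolar{y}$, i.e. $\langle x,y\rangle\le 1$, which by the symmetry noted above is the statement $y\in\planezeropolar{x}$. As $y$ was an arbitrary point of $\polarzero{P}$, this yields $\polarzero{P}\subseteq\planezeropolar{x}$. The second inclusion is entirely analogous: for an arbitrary $y\in\polarinf{P}$, Lemma~\ref{lemma:Polarizations of polyhedra} gives $P\subseteq\planeinfpolar{y}$, hence $x\in\planeinfpolar{y}$, hence $\langle x,y\rangle\ge 1$, hence $y\in\planeinfpolar{x}$, so $\polarinf{P}\subseteq\planeinfpolar{x}$.

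I do not anticipate a genuine obstacle here; the proof is a two-line consequence of the previous lemma once one reads the membership conditions as statements about the inner product. The only points requiring care are to invoke Lemma~\ref{lemma:Polarizations of polyhedra} in the correct direction (from a point lying in the polarization to a halfspace containing all of $P$) and to record explicitly that $\langle x,y\rangle\le 1$ and $\langle x,y\rangle\ge 1$ are symmetric in $x$ and $y$, which is what legitimizes swapping the roles of the point and the polar. A direct argument from the definition $\polarzero{P}=\phzero{\dualshell{P}}$ using Lemma~\ref{lemma:polarity property} on each vertex of $P$ is also possible, but routing through Lemma~\ref{lemma:Polarizations of polyhedra} is shorter and avoids re-deriving the vertex/halfspace correspondence.
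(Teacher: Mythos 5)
Your proof is correct and uses the same two ingredients as the paper's own proof: Lemma~\ref{lemma:Polarizations of polyhedra} applied to an arbitrary point $y$ of the polar body, together with the symmetry of the relation $\langle x,y\rangle\le 1$ (which is exactly the content of Lemma~\ref{lemma:polarity property} for points and halfspaces). The only difference is presentational --- you argue directly whereas the paper phrases the identical argument as a proof by contradiction --- so this is essentially the same approach.
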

\begin{proof}
Assume for a contradiction that there is a point $y\in \polarzero{P}$ such that $y\notin \planezeropolar{x}$.
Therefore, by Lemma~\ref{lemma:polarity property} we know that $x\notin \planezeropolar{y}$.
Moreover, because $y\in \polarzero{P}$, Lemma~\ref{lemma:Polarizations of polyhedra} implies that 
$P\subseteq \planezeropolar{y}$---a contradiction with the fact that $x\in P$ and $x\notin \planezeropolar{y}$.
An analogous proof holds to show that $\polarinf{P}\subseteq \planeinfpolar{x}$.
\end{proof}

Note that the converse of Lemma~\ref{lemma:Result for Shell-simplices} is not necessarily true.

\begin{lemma}\label{lemma:A tangent polarizes to a point inside}
Let $P$ be a polyhedron in $\mathbb{R}^d$ and let $\gamma$ be a hyperplane. 
If $\gamma$ is either tangent to $\polarzero{P}$ or to $\polarinf{P}$, then $\polar{\gamma}$ is a point lying on the boundary of $P$.
\end{lemma}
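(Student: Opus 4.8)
The plan is to set $z:=\polar{\gamma}$ and prove the two facts ``$z\in P$'' and ``$z$ lies on some supporting hyperplane of $P$'', which together give $z\in\partial P$. I will carry out the case ``$\gamma$ tangent to $\polarzero{P}$'' in detail; the case of $\polarinf{P}$ is the mirror image, obtained by interchanging the roles of $\zero$ and $\infty$ (with one extra remark, noted below, needed to pick the correct side). Write $\gamma=\{w:\langle z,w\rangle=1\}$. Being tangent means $\gamma$ is a supporting hyperplane of $\polarzero{P}$: there is a point $y\in\gamma\cap\polarzero{P}$ and $\polarzero{P}$ lies in one of the two closed halfspaces bounded by $\gamma$. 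Since $\polarzero{P}=\bigcap_{v\in V(P)}\planezeropolar{v}$ always contains $\zero$ (each defining inequality $\langle v,\cdot\rangle\le 1$ is satisfied strictly at $\zero$), and no hyperplane here passes through $\zero$, that halfspace must be $\planezero{\gamma}=\planezeropolar{z}$; hence $\polarzero{P}\subseteq\planezeropolar{z}$, and $y\in\gamma$ gives $\langle z,y\rangle=1$. (For $\polarinf{P}$ one instead observes that together with any of its points $q$ the set $\polarinf{P}$ contains the whole ray $\{tq:t\ge 1\}$, so it is unbounded in a direction along which $\langle z,\cdot\rangle$ is unbounded above; this rules out the wrong side and forces $\polarinf{P}\subseteq\planeinfpolar{z}$, after which everything is symmetric.)

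For ``$z\in P$'': the relations $\polarzero{P}\subseteq\planezeropolar{z}$ and $\langle z,y\rangle=1$ say that $y$ maximises the linear functional $w\mapsto\langle z,w\rangle$ over the polyhedron $\polarzero{P}=\{w:\langle v,w\rangle\le 1\ \text{for all}\ v\in V(P)\}$, with optimal value $1$. Linear-programming duality (equivalently: $z$ lies in the normal cone of $\polarzero{P}$ at $y$, which is generated by the normals $v$ of the facets active at $y$) then produces coefficients $\lambda_v\ge 0$ with $\sum_{v\in V(P)}\lambda_v v=z$ and $\sum_{v\in V(P)}\lambda_v=1$. Thus $z$ is a convex combination of vertices of $P$, so $z\in\ch{V(P)}\subseteq P$. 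For the supporting hyperplane: since $y\in\polarzero{P}$, Lemma~\ref{lemma:Polarizations of polyhedra} gives $P\subseteq\planezeropolar{y}$, and $\langle z,y\rangle=1$ means $z\in\polar{y}$. So $\polar{y}$ bounds a halfspace containing $P$ and passes through the point $z\in P$: it is a supporting hyperplane of $P$ at $z$, and therefore $z\in\partial P$.

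The step I expect to be delicate is that the origin need not lie in $P$. When $\zero\in P$ there is a shortcut: applying Lemma~\ref{lemma:Polarizations of polyhedra} to the polyhedron $\polarzero{P}$ turns $\polarzero{P}\subseteq\planezeropolar{z}$ directly into $z\in\polarzero{\polarzero{P}}$, which equals $P$ by Lemma~\ref{lemma:Polar of the polar}. But precisely in the situation relevant for $\polarinf{P}$ (and for $\polarzero{P}$ when $\zero\notin P$) the polar polyhedron is unbounded and the identity $\polarzero{\polarzero{P}}=P$ fails, so one cannot simply declare $\polar{\gamma}$ a ``vertex of the double polar''. This is why the argument above extracts $z$ as a convex combination of the \emph{vertices} of $P$ directly from the halfspace description of $\polarzero{P}$, letting the position of $\zero$ enter only through the elementary observations used to select the correct side of $\gamma$.
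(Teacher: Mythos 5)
Your proof is correct, and while its overall skeleton matches the paper's---first show $z=\polar{\gamma}\in P$, then exhibit a supporting hyperplane of $P$ through $z$---the key first step is carried out by a genuinely different argument. The paper assumes tangency at a vertex $v$ of $\polarzero{P}$ and argues by contradiction: if $\polar{\gamma}\notin P$, it slightly perturbs the supporting hyperplane $\polar{v}$ of $P$ to a hyperplane $h$ with $P\subseteq\planezero{h}$ and $\polar{\gamma}$ strictly inside $\planeinf{h}$, then polarizes back (Lemmas~\ref{lemma:polarity property} and~\ref{lemma:Polarizations of polyhedra}) to obtain a point $\polar{h}\in\polarzero{P}$ strictly beyond $\gamma$, contradicting tangency. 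You instead note that the tangency point $y$ maximizes $\langle z,\cdot\rangle$ over $\polarzero{P}=\{w:\langle v,w\rangle\le1,\ v\in V(P)\}$ with value $1$, and extract from LP duality (equivalently the normal-cone/Farkas description) coefficients $\lambda_v\ge0$ with $\sum\lambda_v v=z$ and $\sum\lambda_v=\langle z,y\rangle=1$, so that $z\in\ch{V(P)}\subseteq P$ directly. Your route is constructive, works uniformly for unbounded polars without detouring through $\polarzero{\polarzero{P}}=P$, and replaces the paper's somewhat informal ``slightly perturb'' step, at the cost of importing LP duality, which the paper otherwise never uses. You also explicitly justify which closed halfspace of $\gamma$ contains $\polarzero{P}$ (via $\zero\in\polarzero{P}$) resp.\ $\polarinf{P}$ (via the ray $\{ty:t\ge1\}$), a point the paper's contradiction tacitly assumes. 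The concluding boundary argument (the hyperplane $\polar{y}$ supports $P$ at $z$) is essentially identical to the paper's, with the tangency vertex $v$ replaced by an arbitrary tangency point $y$.
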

\begin{proof}
Let $\gamma$ be a hyperplane tangent to $\polarzero{P}$ at a vertex $v$. 
Because $v\in \gamma$, Lemma~\ref{lemma:polarity property} implies that $\polar{\gamma}\in \polar{v}$.
We claim that $\polar{\gamma}\in P$. Assume for a contradiction that $\polar{\gamma}\notin P$. 
Since $v\in \polarzero{P}$, we know that $P\subseteq \polarzero{v}$ by Lemma~\ref{lemma:Polarizations of polyhedra}.
Therefore, because $\polar{\gamma}\in \polar{v}$ and from the assumption that $\polar{\gamma}\notin P$, 
we can slightly perturb $\polar{v}$ to obtain a hyperplane $h$ such that $P\subseteq \planezero{h}$ while $\polar{\gamma}$ lies in the interior of~$\planeinf{h}$. 
Thus, since $\polar{\gamma}\in \planeinf{h}$ while $\polar{\gamma}\notin h$ , Lemma~\ref{lemma:polarity property} implies that $\polar{h}$ lies in the interior of $\planeinf{\gamma}$.
Moreover, because $P\subseteq \planezero{h}$ we know by Lemma~\ref{lemma:Polarizations of polyhedra} that $\polar{h}\in \polarzero{P}$.
Therefore, there is a point of $\polarzero{P}$, say $\polar{h}$, that lies in the interior of $\planeinf{\gamma}$---a contradiction with the fact that $\gamma$ is tangent to $\polarzero{P}$. Therefore, $\polar{\gamma}\in P$.
Moreover, because $\polar{\gamma}\in \polar{v}$ and from the fact that $P\subseteq \polarzero{v}$, $\polar{\gamma}$ cannot lie in the interior of $P$, i.e, $\polar{\gamma}$ lies on the boundary of $P$. An analogous proof holds for the case when $\gamma$ is tangent to $\polarinf{P}$.
\end{proof}

\begin{lemma}\label{lemma:Inverse of inclusion}
Let $P$ and $Q$ be two polyhedra. If $P\subseteq Q$, then $\polarzero{Q}\subseteq \polarzero{P}$ and $\polarinf{Q}\subseteq \polarinf{P}$.
\end{lemma}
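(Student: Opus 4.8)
\textbf{Proof proposal for Lemma~\ref{lemma:Inverse of inclusion}.}
The plan is to unwind the definitions of $\polarzero{\cdot}$ and $\polarinf{\cdot}$ through Lemma~\ref{lemma:Polarizations of polyhedra}, which characterizes membership in a polarization purely in terms of containment of the original polyhedron in a halfspace. The key observation is that the characterization ``$x \in \polarzero{R}$ iff $R \subseteq \planezeropolar{x}$'' is monotone in $R$ in exactly the direction we need: if the ambient polyhedron shrinks, the set of $x$ for which it fits inside $\planezeropolar{x}$ can only grow. Concretely, first I would take an arbitrary point $x \in \polarzero{Q}$. By Lemma~\ref{lemma:Polarizations of polyhedra}, this is equivalent to $Q \subseteq \planezeropolar{x}$. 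Since $P \subseteq Q$ by hypothesis, transitivity of inclusion gives $P \subseteq \planezeropolar{x}$. Applying Lemma~\ref{lemma:Polarizations of polyhedra} once more (now to $P$) yields $x \in \polarzero{P}$. As $x$ was arbitrary, $\polarzero{Q} \subseteq \polarzero{P}$.

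The second inclusion is handled identically, replacing every occurrence of the subscript $\zero$ by $\infty$ and using the corresponding half of Lemma~\ref{lemma:Polarizations of polyhedra}: $x \in \polarinf{Q}$ iff $Q \subseteq \planeinfpolar{x}$, which together with $P \subseteq Q$ gives $P \subseteq \planeinfpolar{x}$, hence $x \in \polarinf{P}$.

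I do not anticipate a genuine obstacle here; the statement is essentially a one-line consequence of the membership characterization already established. The only point requiring a moment of care is that $\polarzero{\cdot}$ and $\polarinf{\cdot}$ are defined via the vertex set $V(P)$ (their polar set $\dualshell{P}$), so one should not try to argue directly at the level of vertices or supporting hyperplanes—$P \subseteq Q$ does not imply any simple containment between $V(P)$ and $V(Q)$ or between $\shell{P}$ and $\shell{Q}$. Routing the argument through Lemma~\ref{lemma:Polarizations of polyhedra}, whose statement is phrased entirely in terms of the polyhedra as point sets rather than their combinatorial descriptions, sidesteps this issue cleanly. One should also note that the lemma is stated for arbitrary polyhedra, with no hypothesis that either contains the origin, which is consistent with Lemma~\ref{lemma:Polarizations of polyhedra} being stated at the same level of generality, so no case split on the position of $\zero$ is needed.
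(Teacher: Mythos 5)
Your proof is correct and follows exactly the paper's own argument: take $x\in\polarzero{Q}$, apply Lemma~\ref{lemma:Polarizations of polyhedra} to get $Q\subseteq\planezeropolar{x}$, use $P\subseteq Q$ and transitivity, and apply the lemma again to conclude $x\in\polarzero{P}$, with the $\infty$ case handled symmetrically. Your added remark about why one should not argue at the level of vertices or supporting hyperplanes is a sensible clarification but does not change the substance.
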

\begin{proof}
Let $x\in \polarzero{Q}$. Then, Lemma~\ref{lemma:Polarizations of polyhedra} implies that $Q\subseteq \polarzero{x}$. 
Because we assumed that $P\subseteq Q$, $P\subseteq \polarzero{x}$. Therefore, we infer from Lemma~\ref{lemma:Polarizations of polyhedra} that $x\in \polarzero{P}$. That is,  $\polarzero{Q}\subseteq \polarzero{P}$. An analogous proof holds to show that $\polarinf{Q}\subseteq \polarinf{P}$.
\end{proof}

A hyperplane $\pi$ \emph{separates} two geometric objects in $\mathbb{R}^d$ if they are contained in opposite halfspaces supported by $\pi$, note that both objects can contain points lying on $\pi$.
We obtain the main result of this~section illustrated in Figure~\ref{fig:PolarityOfConvexPolyhedra}.

\begin{figure}[h]
\centering
\includegraphics{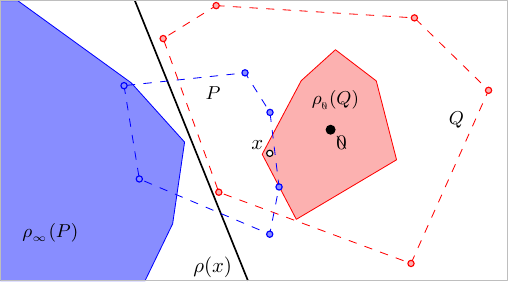}
\caption{\small The statement of Theorem~\ref{theorem: Polarity of polyhedra} where a point $x$ lies in the intersection of $P$ and $\polarzero{Q}$ if and only if $\polar{x}$ separates $Q$ from $\polarinf{P}$.}
\label{fig:PolarityOfConvexPolyhedra}
\end{figure}
\begin{theorem}\label{theorem: Polarity of polyhedra}
Let $P$ and $Q$ be two polyhedra.
The polyhedra $P$ and $\polarzero{Q}$ intersect if and only if there is a hyperplane that separates $\polarinf{P}$ from $Q$. 
Also, (1) if $x\in P\cap \polarzero{Q}$, then $\polar{x}$ separates $\polarinf{P}$ from $Q$, and (2) if $\gamma$ is a hyperplane that separates $\polarinf{P}$ from $Q$ such that $\gamma$ is tangent to $\polarinf{P}$, then $\polar{\gamma}\in P\cap \polarzero{Q}$. 
Moreover, the symmetric statements of (1) and (2) hold if we replace all instances of $P$ (\emph{resp.} $\infty$) by $Q$ (\emph{resp.} $\zero$) and vice versa.
\end{theorem}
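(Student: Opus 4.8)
The plan is to derive the two labeled implications (1) and (2) straight from the polarity lemmas, then read off the main equivalence from them; the two symmetric statements require no new work, since every lemma invoked is symmetric under exchanging $P\leftrightarrow Q$ together with $\zero\leftrightarrow\infty$, so one simply reruns the argument with the labels swapped.

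For (1), suppose $x\in P\cap\polarzero{Q}$. From $x\in P$, Lemma~\ref{lemma:Result for Shell-simplices} gives $\polarinf{P}\subseteq\planeinfpolar{x}$, and from $x\in\polarzero{Q}$, Lemma~\ref{lemma:Polarizations of polyhedra} gives $Q\subseteq\planezeropolar{x}$. Since $\planeinfpolar{x}$ and $\planezeropolar{x}$ are precisely the two closed halfspaces bounded by $\polar{x}$, the hyperplane $\polar{x}$ separates $\polarinf{P}$ from $Q$. This already gives the ``only if'' direction of the equivalence: take any $x\in P\cap\polarzero{Q}$ and apply (1).

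For (2), let $\gamma$ separate $\polarinf{P}$ from $Q$ with $\gamma$ tangent to $\polarinf{P}$. Lemma~\ref{lemma:A tangent polarizes to a point inside} gives $\polar{\gamma}\in\partial P\subseteq P$, so applying Lemma~\ref{lemma:Result for Shell-simplices} to the point $\polar{\gamma}$ yields $\polarinf{P}\subseteq\planeinfpolar{\polar{\gamma}}=\planeinf{\gamma}$. Since $\gamma$ separates $\polarinf{P}$ from $Q$, the polyhedron $Q$ sits in the closed halfspace of $\gamma$ opposite to the one containing $\polarinf{P}$; because $\polarinf{P}$ lies in (and, barring degeneracies, is not contained in) $\planeinf{\gamma}$, that opposite halfspace is $\planezero{\gamma}=\planezeropolar{\polar{\gamma}}$. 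Hence $Q\subseteq\planezeropolar{\polar{\gamma}}$, so Lemma~\ref{lemma:Polarizations of polyhedra} gives $\polar{\gamma}\in\polarzero{Q}$, and therefore $\polar{\gamma}\in P\cap\polarzero{Q}$. For the ``if'' direction of the equivalence, start from any hyperplane $\pi$ separating $\polarinf{P}$ from $Q$. If $\polarinf{P}=\emptyset$, then by Farkas (the system $\langle v,\cdot\rangle\ge1$, $v\in V(P)$, being infeasible) $\zero\in\ch{V(P)}\subseteq P$, while $\zero\in\polarzero{Q}$ always (as $\langle v,\zero\rangle=0\le1$ for every $v\in V(Q)$), so $P\cap\polarzero{Q}\neq\emptyset$. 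Otherwise $\polarinf{P}$ lies in a closed halfspace bounded by $\pi$, so the linear functional defining that halfspace is bounded above on the polyhedron $\polarinf{P}$ and attains its maximum there; the corresponding supporting hyperplane $\gamma$, parallel to $\pi$, is tangent to $\polarinf{P}$ and sits on its side of $\pi$, hence still separates $\polarinf{P}$ from $Q$, and (2) applied to $\gamma$ produces the desired point.

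The genuinely routine parts are the three lemma invocations. The delicate points, which I expect to be the main obstacle, are the two orientation arguments. In (2) one must show $Q$ lands in the \emph{specific} halfspace $\planezero{\gamma}$ rather than merely ``some'' halfspace of $\gamma$; this is forced by $\polar{\gamma}\in P$, which pins $\polarinf{P}$ to $\planeinf{\gamma}$. In the ``if'' direction the nontrivial step is turning an arbitrary separating hyperplane into a tangent one without losing separation, which rests on the linear-programming fact that a linear functional bounded above on a polyhedron attains its maximum; alongside this one must dispatch the degenerate case $\polarinf{P}=\emptyset$ separately and take minor care that the constructed $\gamma$ avoids the origin so that $\polar{\gamma}$ is well defined.
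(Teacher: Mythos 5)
Your proof is correct and follows essentially the same route as the paper's: implication (1) via Lemma~\ref{lemma:Result for Shell-simplices} and Lemma~\ref{lemma:Polarizations of polyhedra}, and implication (2) via Lemma~\ref{lemma:A tangent polarizes to a point inside} followed by the same two lemmas, with the equivalence read off from (1) and (2). The only difference is cosmetic: you spell out the existence of the parallel tangent hyperplane and the degenerate case $\polarinf{P}=\emptyset$, which the paper simply asserts and defers to the remark following the theorem ($\zero\in P$ makes the statement trivially true).
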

\begin{proof}
Let $x$ be a point in $P\cap \polarzero{Q}$. Because $x \in P$, by Lemma~\ref{lemma:Result for Shell-simplices} we know that 
$\polarinf{P}\subseteq \planeinfpolar{x}$.
Moreover, since $x \in \polarzero{Q}$, by Lemma~\ref{lemma:Polarizations of polyhedra}, 
$Q\subseteq \planezeropolar{x}$. Therefore, $\polar{x}$ is a hyperplane that separates $\polarinf{P}$ from $Q$. 

In the other direction, let $\gamma'$ be a hyperplane that separates $\polarinf{P}$ from $Q$. 
Then, there is a hyperplane $\gamma$ parallel to $\gamma'$ that separates $\polarinf{P}$ from $Q$ such that $\gamma$ is tangent to $\polarinf{P}$. 
Therefore, $\polar{\gamma}$ is a point on the boundary of $P$ by Lemma~\ref{lemma:A tangent polarizes to a point inside}.
Because $\polar{\gamma}\in P$, Lemma~\ref{lemma:Result for Shell-simplices} implies that $\polarzero{P}\subseteq \planezero{\gamma}$ while $\polarinf{P}\subseteq \planeinf{\gamma}$. Because $\gamma$ separates $\polarinf{P}$ from $Q$ and from the fact that $\polarinf{P}\subseteq \planeinf{\gamma}$, we conclude that $Q\subseteq \planezero{\gamma}$.
Consequently, by Lemma~\ref{lemma:Polarizations of polyhedra} $\polar{\gamma}\in \polarzero{Q}$. 
That is, $\polar{\gamma}$ is a point in the intersection of $P$ and $\polarzero{Q}$.
The symmetric statements have analogous proofs.
\end{proof}

Notice that if $\zero \in P$, then $P$ and $\polarzero{Q}$ trivially intersect. Moreover, $\polarinf{P} = \emptyset$ implying that every hyperplane trivially separates $\polarinf{P}$ from $Q$. Therefore, while being always true, this result is non-trivial only when~$\zero \notin P$.

\section{Polyhedra in 3D space}\label{section:Bounded degree polyhedron}

In this section, we focus on polyhedra in $\mathbb{R}^3$. Therefore, we can consider the 1-skeleton of a polyhedron being the planar graph connecting its vertices through the edges of the polyhedron.

Given a polyhedron $P$, a sequence $P_1, P_2, \ldots, P_k$ is a DK-hierarchy of $P$ if the following properties hold~\cite{dobkin1985linear}.
\begin{enumerate}
\item[$A1.$] $P_1 = P$ and $P_k$ a tetrahedron.
\item[$A2.$] $P_{i+1} \subseteq P_i$, for $1\leq i\leq k$.
\item[$A3.$] $V(P_{i+1}) \subseteq V(P_i)$, for $1\leq i\leq k$.
\item[$A4.$] The vertices of $V(P_i)\setminus V(P_{i+1})$ form an independent set in $P_i$, for $1\leq i< k$.
\item[$A5.$] The \emph{height} of the hierarchy $k = O(\log n)$, $\sum_{i=1}^k V(P_i) = O(n)$.
\end{enumerate}

Given a polyhedron $P$ on $n$ vertices, a set $I\subseteq V(P)$ is a \emph{$P$-independent set} if (1) $|I| \geq n/10$, (2) $I$ forms an independent set in the 1-skeleton of $P$ and (3) the degree of every vertex in $I$ is~$O(1)$.

Dobkin and Kirkpatrick~\cite{dobkin1985linear} showed how to construct a DK-hierarchy. This construction was later improved by Biedl and Wilkinson~\cite{biedl2005bounded}.
Formally, they start by defining $P_1 = P$. Then, given a polyhedron $P_i$, they show how to compute a $P_i$-independent set~$I$ and define $P_{i+1}$ as the convex hull of the set $V(P_i)\setminus I$. 

Using this data structure, they provide an algorithm that computes the distance between two preprocessed polyhedra in $O(\log^2 n)$ time~\cite{dobkin1990determining}. 
As we show below however, a straightforward implementation
of their algorithm could be be much slower than this claimed bound. 

In our algorithm, as well as in the algorithm presented by Dobkin and Kirkpatrick~\cite{dobkin1990determining}, we are given a plane tangent to $P_i$ at a vertex $v$ and want to find a vertex of $P_{i-1}$ lying on the other side of this plane (if it exists). 
Although they showed that at most one vertex of $P_{i-1}$ can lie on the other side of this plane and that it has to be adjacent to $v$, they do not explain how to find such a vertex. 
An exhaustive walk through the neighbors of
$v$ in $P_{i-1}$ would only be fast enough for their algorithm if $v$ is always of
constant degree. Unfortunately this is not always the case as shown
in the following example.

Start with a tetrahedron $P_k$ and select a vertex $q$ of $P_k$. To construct the polyhedron $P_{i-1}$ from $P_i$, we refine it by adding a vertex slightly above each face adjacent to $q$. In this way, the degree of the new vertices is exactly three. 
After $k$ steps, we reach a polyhedron $P_1 = P$. 
In this way, the sequence $P = P_1, P_2, \ldots, P_k$ defines a DK-hierarchy of $P$. 
Moreover, when going from $P_i$ to $P_{i-1}$, a new neighbor of $q$ is added for each of its adjacent faces in $P_i$. Thus, the degree of $q$ doubles when going from $P_i$ to $P_{i-1}$ and hence, the degree of $q$ in $P_1$ is linear.
Note that this situation can occur at a deeper level of the hierarchy, even if every vertex of $P$ has degree three.

A solution to this problem is described by O'Rourke~\cite[Chapter 7]{o1998computational}. In the next section, we provide an alternative solution to this problem by bounding the degree of each vertex in every polyhedron of the DK-hierarchy.

\begin{figure*}
\centering
\includegraphics{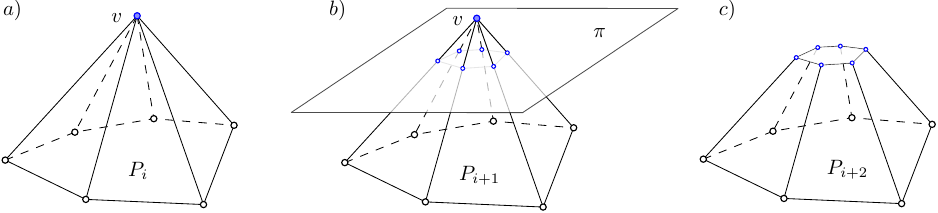}
\caption{\small A polyhedron $P$ and a vertex $v$ of large degree.  A plane $\pi$ that separates $v$ from $V(P)\setminus \{v\}$ is used to split the edges adjacent to $v$. New vertices are added to split these edges.  Finally, the removal of $v$ from the polyhedron leaves every one of its neighbors with degree three while adding a new face.}
\label{fig:Chopping}
\end{figure*}

\subsection*{Bounded hierarchies}

Let $c$ be a fixed constant.
We say that a polyhedron is \emph{\bounded} if at most $c$ faces of this polyhedron can meet at a vertex, i.e., the degree of each vertex in its 1-skeleton is bounded by~$c$.

Given a polyhedron $P$ with $n$ vertices, we describe a method to modify the structure of Dobkin and Kirkpatrick to construct a DK-hierarchy where every polyhedron other than $P$ is $c$-bounded.
As a starting point, we can assume that the faces of $P$ are in general position (i.e., no four planes of $\shell{P}$ go through a single point) by using Simulation of Simplicity~\cite{edelsbrunner1990simulation}. This implies that every vertex of $P$ has degree three.
To avoid having vertices of large degree in the hierarchy, we introduce the following operation.
Given a vertex $v\in V(P)$ of degree $k>3$, consider a plane $\pi$ that separates $v$ from every other vertex of $P$. 
Let $e_1, e_2, \ldots, e_k$ be the edges of $P$ incident to $v$. For each $1\leq i\leq k$, let $v_i$ be the intersections of $e_i$ with $\pi$. \emph{Split} the edge $e_i$ at $v_i$ to obtain a new polyhedron with $k$ more vertices and $k$ new edges; for an illustration see Figure~\ref{fig:Chopping} $(a)$ and $(b)$.

To construct a \bounded DK-hierarchy (or simply \hier), we start by letting $P_1 = P$. 
Given a polyhedron $P_i$ in this \hier, let $I$ be a $P_i$-independent set.
Compute the convex hull of $V(P_i) \setminus I$, two cases arise:
\textbf{Case 1.} If $\ch{V(P_i) \setminus I}$ has no vertex of degree larger than $c$, then let $P_{i+1} = \ch{V(P_i) \setminus I}$. 
\textbf{Case 2.} Otherwise, let $W$ be the set of vertices of $P_i$ with degree larger than $3$. For each vertex of $W$, split its adjacent edges as described above and let $P_{i+1}$ be the obtained polyhedron.
Notice that $P_{i+1}$ is a polyhedron with the same number of faces as $P_i$. 
Moreover, because each edge of $P_i$ may be split for each of its endpoints, 
$P_{i+1}$ has at most three times the number the edges of $P_i$. 
Therefore $|V(P_{i+1})| \leq (2/3)|E(P_{i+1}) \leq 2|E(P_i)| \leq 6|V(P_i)|$ by Euler's formula.

Because each vertex of $W$ is adjacent only to new vertices added during the split of its adjacent edges, the vertices in $W$ form an independent set in the 1-skeleton of $P_{i+1}$.
In this case, we let $P_{i+2}$ be the convex hull of $V(P_{i+1}) \setminus W$. Therefore, $(1)$ every vertex of $P_{i+2}$ has degree three, and $(2)$ the vertices in $V(P_{i+1})\setminus V(P_{i+2})$ form an independent set in $P_{i+1}$; see Figure~\ref{fig:Chopping}$(c)$.
Note that $P_{i+1}$ and $P_{i+2}$ have new vertices added during the splits. However, we know that $|V(P_{i+2})| \leq |V(P_{i+1})| \leq 6|V(P_i)|$. 
Furthermore, we also know that $P_{i+2}\subseteq P_{i+1}\subseteq P_i$.

We claim that by choosing $c$ carefully, we can guarantee that the depth of the \hier is $O(\log n)$.
To prove this claim, notice that after a pruning step, the
degree of a vertex can increase at most by the total degree of its
neighbors that have been eliminated.
Let $v$ be a vertex with the largest degree in $P_i$. Note that its neighbors can also have at most degree $\delta(v)$, where $\delta(v)$ denotes the number of neighbors of $v$ in $P_i$. Therefore, after removing a $P_i$-independent set, the degree of $v$ can be at most $\delta(v)^2$ in $P_{i+1}$. 
That is, the maximum degree of $P_i$ can be at most squared when going from $P_i$ to $P_{i+1}$. 

Therefore, if we assume Case~2 has just been applied and that every vertex vertex of $P_i$ has degree three, then after $r$ rounds of Case~1, the
maximum degree of any vertex is at most $3^{2^r}$.
Therefore, the degree of any of its vertices can go above $c$ only after $\log_2(\log_3 c)$ rounds, i.e., we go through Case 1 at least $\log_2(\log_3 c)$ times before running into Case 2.

Since we removed at least $1/10$-th of the vertices after each iteration of Case 1~\cite{biedl2005bounded}, after $\log_2(\log_3 c)$ rounds the size of the current polyhedron is at most $(9/10)^{\log_2(\log_3 c)}|P_i|$. 
At this point, we run into Case 2 and add extra vertices to the polyhedron. However, by choosing $c$ sufficiently large, we guarantee that the number of remaining vertices is at most $6\cdot (9/10)^{\log_2(\log_3 c)}|P_i| < \alpha |P_i|$ for some constant $0<\alpha<1$. That is, after $\log_2(\log_3 c)$ rounds the size of the polyhedron decreases by constant factor implying a logarithmic depth. We obtain the following result.

\begin{lemma}
Given a polyhedron $P$, the previous algorithm constructs a \hier $P_1, P_2, \ldots, P_k$ with following properties.
\begin{enumerate}
\item[$B1.$] $P_1 = P$ and $P_k$ is a tetrahedron.
\item[$B2.$] $P_{i+1} \subseteq P_i$, for $1\leq i\leq k$.
\item[$B3.$] The polyhedron $P_i$ is \bounded, for $1\leq i\leq k$.
\item[$B4.$] The vertices of $V(P_i)\setminus V(P_{i+1})$ form an independent set in $P_i$, for $1\leq i< k$.\label{property:Independence}
\item[$B5.$] The \emph{height} of the hierarchy $k = O(\log n)$, $\sum_{i=1}^k V(P_i) = O(n)$.
\end{enumerate}
\end{lemma}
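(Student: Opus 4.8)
I would obtain the five properties by assembling the facts already set up in the discussion preceding the statement; essentially all of the content lies in B5. Properties B2 and B4 fall out of the construction directly. For B2: in Case~1, $P_{i+1}=\ch{V(P_i)\setminus I}\subseteq\ch{V(P_i)}\subseteq P_i$; in Case~2 the splitting step only introduces new vertices in the relative interiors of edges, so $P_{i+1}$ coincides, as a point set, with $\ch{V(P_i)\setminus I}\subseteq P_i$, and then $P_{i+2}=\ch{V(P_{i+1})\setminus W}\subseteq P_{i+1}$. For B4: in Case~1 the deleted set $V(P_i)\setminus V(P_{i+1})$ is exactly the $P_i$-independent set $I$, which is independent in the $1$-skeleton of $P_i$ by definition; in Case~2 the same holds for the transition $P_i\to P_{i+1}$ (splitting creates only brand-new vertices, so $V(P_i)\setminus V(P_{i+1})=I$ again), whereas $V(P_{i+1})\setminus V(P_{i+2})=W$, and, as noted before the lemma, every vertex of $W$ is adjacent in $P_{i+1}$ only to freshly created vertices, so $W$ is independent in $P_{i+1}$. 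For B1 I would simply observe that as long as $P_i$ is not a tetrahedron it still admits a $P_i$-independent set, so the construction never stalls; since by the analysis for B5 the vertex count shrinks over every $O(1)$-length block of levels, the sequence is finite and its last term may be taken to be a tetrahedron.

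For B3 and B5 I would reuse the two quantitative observations from the preamble of the lemma. B3: in Case~1 the branch is entered precisely when $\ch{V(P_i)\setminus I}$ has maximum degree at most $c$, so $P_{i+1}$ is \bounded; in Case~2 every vertex of degree exceeding $c$ is collected into $W$, pairwise disconnected by the splitting, and deleted at the next step, leaving $P_{i+2}$ of maximum degree three and hence \bounded. For B5, recall that deleting a $P_i$-independent set can at most square the maximum degree, because a surviving vertex gains at most the total degree of its removed neighbours and each of those has degree at most the current maximum; so, starting from a polyhedron all of whose vertices have degree three, after $r$ consecutive Case~1 rounds the maximum degree is at most $3^{2^r}$, and Case~2 cannot be triggered until at least $r_0:=\log_2\log_3 c$ rounds of Case~1 have occurred. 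Each such round removes at least a $1/10$ fraction of the vertices, so after $r_0$ of them the vertex count is at most $(9/10)^{r_0}$ times its value at the block's start, while the ensuing Case~2 double-step multiplies it by at most a factor $6$ (each edge is split at most once per endpoint, so the edge count at most triples, and $|V|\le\tfrac23|E|$ by Euler's formula). Picking $c$ large enough that $6\,(9/10)^{r_0}<1$, every block of $r_0+O(1)=O(1)$ consecutive levels shrinks the vertex count by a fixed factor $\alpha<1$; this gives $k=O(\log n)$, and summing the resulting geometric series of block-starting sizes --- each block contributing $O(1)$ polyhedra, each of size at most $6$ times that block's starting size --- yields $\sum_{i=1}^{k}|V(P_i)|=O(n)$.

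The only delicate point, and the one I expect to be the real obstacle, is the accounting in B5: one has to be sure that the transient blow-up caused by the edge-splitting in Case~2 is dominated by the geometric shrinkage forced over the $\Omega(\log\log c)$ Case~1 rounds that must precede it --- this is exactly why $c$ must be taken sufficiently large, since a small $c$ would let Case~2 fire too frequently and ruin the logarithmic depth. The remaining properties need only a careful reading of the construction; the one mild subtlety is keeping track of which vertices are inherited from $V(P_i)$ and which are created afresh as one passes through the two-level Case~2 transition, which is precisely what makes the independence claim B4 and the degree claim B3 come out cleanly.
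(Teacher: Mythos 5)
Your proposal matches the paper's own justification, which presents this lemma as a summary of the construction and analysis immediately preceding it: the same degree-squaring bound forcing at least $\log_2(\log_3 c)$ Case~1 rounds before Case~2, the same factor-$6$ blow-up from edge splitting controlled by choosing $c$ large enough that $6(9/10)^{\log_2(\log_3 c)}<1$, and the same observations for $B1$--$B4$. Your explicit verification of $B2$--$B4$ (tracking which vertices are inherited versus created by splitting) is if anything more careful than the paper's, which leaves those properties implicit.
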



The following property of a DK-hierarchy of $P$ was proved in~\cite{dobkin1990determining} and is easily extended to \hiers because its proof does not use property $A3$. Note that all
properties of DK and BDK hierarchies are identical except for $B3\neq A3$.

\begin{lemma}\label{lemma:halfspace property}
Let $P_1, \ldots, P_k$ be a \hier of a polyhedron $P$ and let $H$ be a plane defining two halfspaces $H^+$ and $H^-$.
For any $1\leq i\leq k$ such that $P_{i+1}$ is contained in $H^+$, either $P_i\subseteq H^+$ or there exists a unique vertex $v\in V(P_i)$ such that $v\in H^-\setminus H$.
\end{lemma}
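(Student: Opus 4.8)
The plan is to prove Lemma~\ref{lemma:halfspace property} by contradiction, exploiting the independence property $B4$ together with convexity. Suppose $P_{i+1} \subseteq H^+$ but $P_i \not\subseteq H^+$, so the set $W = V(P_i) \cap (H^- \setminus H)$ of vertices of $P_i$ strictly on the negative side is nonempty; I want to show $|W| = 1$. The key observation is that every vertex in $W$ must be a vertex of $P_i$ that was \emph{deleted} when passing to $P_{i+1}$: indeed, any vertex of $P_i$ that survives into $V(P_{i+1})$ lies in $P_{i+1} \subseteq H^+$, hence cannot be in $H^- \setminus H$. So $W \subseteq V(P_i) \setminus V(P_{i+1})$.

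First I would suppose for contradiction that $W$ contains two distinct vertices $u, v$. By property $B4$, the set $V(P_i) \setminus V(P_{i+1})$ is independent in the $1$-skeleton of $P_i$, so $u$ and $v$ are non-adjacent in $P_i$. Now I would exhibit a point of $P_{i+1}$ lying in $H^-$, contradicting $P_{i+1} \subseteq H^+$. The natural candidate is a point on the segment $uv$: since $u, v \in H^-$ and $H^-$ is convex, the entire segment $uv$ lies in $H^-$, and in fact its relative interior lies in the open halfspace $H^- \setminus H$ unless both $u,v \in H$, which is excluded. It remains to argue that some point of $uv$ — or more precisely some point of $P_i$ near $uv$ — actually lies in $P_{i+1}$. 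Here is where the structure of the hierarchy enters: when the independent set $I = V(P_i) \setminus V(P_{i+1})$ is removed, each deleted vertex $w$ is replaced (in the convex hull $P_{i+1} = \ch{V(P_i)\setminus I}$) by a facet that "cuts off" a neighborhood of $w$; since $u$ and $v$ are non-adjacent, the cuts near $u$ and near $v$ are disjoint, and a point of the segment $uv$ sufficiently far from both endpoints — concretely, a point in $\ch{N(u)} \cap \ch{N(v)}$-type region, or simply the midpoint if $P_i$ is "fat enough" near $uv$ — survives into $P_{i+1}$. The cleanest formulation: because $u$ is an independent vertex of bounded degree, $P_{i+1}$ contains the convex hull of the neighbors $N_{P_i}(u)$ locally around $u$, i.e. $P_{i+1} \supseteq P_i \setminus (\text{the simplex/cone cut off at } u)$, and likewise at $v$; since these two excised regions are disjoint (non-adjacency), the segment $uv$ meets $P_{i+1}$ in some interior sub-segment, all of which lies in $H^-\setminus H$ — contradiction.

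**The main obstacle** will be making the "a point of $uv$ survives into $P_{i+1}$" step fully rigorous: one must show that removing an independent set of vertices only removes, near each deleted vertex, a region that does not contain any interior point of a segment joining two \emph{other} deleted vertices. This is exactly the geometric content of independence — two deleted vertices share no edge, so the facets of $P_{i+1}$ created by deleting $u$ and by deleting $v$ are distinct and their "shadows" on $P_i$ are separated — but spelling it out requires a short argument about which points of $P_i$ remain in $\ch{V(P_i) \setminus I}$. I would handle it by noting that a point $x \in P_i$ fails to be in $P_{i+1}$ only if $x$ is separated from $V(P_i) \setminus I$ by some supporting hyperplane of $P_{i+1}$, i.e. $x$ lies in one of the "caps" cut off at the deleted vertices; each such cap near a deleted vertex $w$ is contained in $\ch{\{w\} \cup N_{P_i}(w)}$, and by independence these caps for distinct deleted vertices meet only along lower-dimensional faces, so no relative-interior point of the segment $uv$ lies in any cap.

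**The uniqueness** then follows immediately, and I would close by remarking — as the authors do — that this argument only invokes $B1$, $B2$, $B4$ (independence) and never uses $V(P_{i+1}) \subseteq V(P_i)$ in a way special to DK-hierarchies, so the proof of~\cite{dobkin1990determining} transfers verbatim to \hiers.
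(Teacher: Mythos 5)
The paper does not reprove this lemma (it defers to Dobkin--Kirkpatrick and merely observes that their proof never uses property $A3$), so the benchmark is the standard argument: for any linear functional $f$ with $H=\{f=c\}$ and $H^-\setminus H=\{f<c\}$, every vertex of a polytope that does not minimize $f$ has a neighbor in the $1$-skeleton with strictly smaller $f$-value; hence the vertices of $P_i$ lying in the open halfspace induce a connected subgraph, all of whose vertices must (as you correctly observe) belong to the independent set $V(P_i)\setminus V(P_{i+1})$, and a connected induced subgraph of an independent set has at most one vertex. Your reduction is different: you reduce to the claim that the segment $uv$ joining two non-adjacent deleted vertices must meet $P_{i+1}$. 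That target claim is indeed true and (by the separating-hyperplane theorem) equivalent to the lemma for the pair $u,v$ --- which is precisely the problem: all of the mathematical content has been pushed into that claim, and your justification of it does not hold up.

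Concretely, two things go wrong in the ``caps'' step. First, the assertion that each connected piece of $P_i\setminus P_{i+1}$ is a cap associated with a \emph{single} deleted vertex $w$ and contained in $\ch{\{w\}\cup N_{P_i}(w)}$ is itself unproven, and its natural proof already requires the lemma: a facet plane $h$ of $P_{i+1}$ cuts off the convex hull of \emph{all} vertices of $P_i$ strictly beyond $h$ (together with edge-crossing points), and knowing that there is only one such vertex is exactly the statement you are trying to establish, applied to $h$. Second, the conclusion ``no relative-interior point of the segment $uv$ lies in any cap'' is false as written: points of $uv$ arbitrarily close to $u$ lie outside the closed set $P_{i+1}$ and hence inside $u$'s cap. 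What you actually need is that \emph{some} point of $uv$ escapes every cap; to get that from disjointness of caps you would need each cap to be convex (so that its trace on $uv$ is a subinterval) and the traces to be pairwise disjoint, neither of which you establish --- and per-facet caps of adjacent new facets genuinely overlap in full-dimensional wedges. I recommend replacing the segment argument by the connectivity/local-improvement argument above: if $u\in V(P_i)$ lies in $H^-\setminus H$ and is not the minimizer of $f$ over $P_i$, it has a neighbor $y$ with $f(y)<f(u)<c$; then $y\notin P_{i+1}\subseteq H^+$, so $y\in V(P_i)\setminus V(P_{i+1})$, contradicting property $B4$ since $u$ and $y$ are adjacent. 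Hence every vertex of $P_i$ in $H^-\setminus H$ is the (unique) minimizer, which gives both existence (from $P_i\not\subseteq H^+$ and $P_i=\ch{V(P_i)}$) and uniqueness. Your closing remark that only $B2$ and $B4$ are needed is correct and survives this repair.
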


\section{Detecting intersections in 3D}\label{section:3D algorithm}

In this section, we show how to independently preprocess polyhedra in 3D-space so that their intersection can be tested in logarithmic time.
\subsection*{Preprocessing}
Let $P$ be a polyhedron in $\mathbb{R}^3$. Assume without loss of generality that the origin lies in the interior of $P$. Otherwise, modify the coordinate system.
To preprocess $P$, we first  compute the polyhedron $\polarzero{P}$ being the polarization of $P$. Then, we independently compute two \hiers as described in Section~\ref{section:Bounded degree polyhedron}, one for $P$ and one for $\polarzero{P}$. 
Recall that in the construction of \hiers, we assume that the faces of the polyhedra being processed are in general position using Simulation of Simplicity~\cite{edelsbrunner1990simulation}.
Assuming that both $P$ and $\polarzero{P}$ have vertices in general position at the same time is not possible. However, this is not a problem as only one of the two \hiers will ever be used in a single query. Therefore, we can independently use Simulation of Simplicity~\cite{edelsbrunner1990simulation} on each of them.

\subsection*{Preliminaries of the algorithm}
Let $P$ and $R$ be two independently preprocessed polyhedra with combinatorial complexities $n$ and $m$, respectively. Throughout this algorithm, we fix  the coordinate system used in the preprocessing of $R$, i.e., $\zero\in R$. For ease of notation, let $Q = \polarzero{R}$. Because $\zero\in R$, Lemma~\ref{lemma:Polar of the polar} implies that $R = \polarzero{Q}$.

The algorithm described in this section tests if $P$ and $R = \polarzero{Q}$ intersect. Therefore, we can assume that $P$ and $\polarzero{Q}$ lie in a \emph{primal space} while $\polarinf{P}$ and $Q$ lie in a \emph{polar space}. 
That is, we look at the primal and polar spaces independently and switch between them whenever necessary.
To test the intersection of $P$ and $\polarzero{Q}$ in the primal space, we use the \hiers of $P$ and $Q$ stored in the preprocessing step.
In an intersection query, we are given arbitrary translations and rotations for $P$ and $\polarzero{Q}$ and we want to decide if they intersect. Note that this is equivalent to answering the query when only a translation and rotation of $P$ is given and $\polarzero{Q}$ remains unchanged. This is important as we fixed the position of the origin inside $R = \polarzero{Q}$.
The idea of the algorithm is to proceed by rounds and in each of them, move down in one of the two hierarchies while maintaining some invariants. In the end, when reaching the bottom of the hierarchy, we determine if $P$ and $\polarzero{Q}$ are separated or not.

Let $k$ and $l$ be the depths of the hierarchies of $P$ and $Q$, respectively.
We use indices $1\leq i\leq k$ and $1\leq j\leq l$ to indicate our position in the hierarchies of $P$ and $Q$.
The idea is to decrement at least one of them in each round of the algorithm.

To maintain constant time operations, instead of considering a full polyhedron $P_i$ in the \hier of $P$, we consider constant complexity polyhedra $P_i^*\subseteq P_i$ and $Q_j^*\subseteq Q_j$. 
Intuitively, both $P_i^*$ and $Q_j^*$ are constant size polyhedra that respectively represent the portions of $P_i$ and $Q_j$ that need to be considered to test for an intersection.

We also maintain a special point $\pointP$ in the primal space which is a vertex of both $P_i^*$ and $P_i$, and a plane $\sepP$ whose properties will be determined later. In the polar space, we keep a point $\pointQ$ being a vertex of both $Q_j^*$ and $Q_j$ and a plane $\sepQ$.

For ease of notation, given a polyhedron $T$ and a vertex $v\in V(T)$, let $T\setminus v$ denote the convex hull of $V(T)\setminus \{v\}$.
The \emph{star invariant} consists of two parts, one in the primal and another in the polar space.
In the primal space, this invariant states that if $i< k$, then (1) the plane $\sepP$ separates $P_i\setminus \pointP$ from $\polarzero{Q_j}$ and (2) $\polar{\sepP}\in Q_j$.
In the polar space, the star invariant states if $j < l$, then (1) the plane $\sepQ$ separates $Q_j\setminus \pointQ$ from $\polarinf{P_i}$ and (2) $\polar{\sepQ}\in P_i$.
Whenever the star invariant is established, we store references to $\sepP$ and $\sepQ$, and to the vertices $\pointP$ and $\pointQ$.

Other invariants are also considered throughout the algorithm.
The \emph{separation invariant} states that we have a plane $\pi$ that separates $P_i$ from $\polarzero{Q_j^*}$ such that $\pi$ is tangent to $P_i$ at one of its vertices.
The \emph{inverse separation invariant} states that there is a plane $\mu$ that separates $\polarinf{P_i^*}$ from $Q_j$ such that $\mu$ is tangent to~$Q_j$ at one of its vertices.

Before stepping into the algorithm, we need a couple of definitions. Given a polyhedron $T$ and a vertex $v\in V(T)$, let $\neighbors{v}{T}$ be a polyhedron defined as the convex hull of $v$ and its neighbors in~$T$.
Let $\cone{v}{T}$ be the convex hull of the set of rays apexed at $v$ shooting from $v$ to each of its neighbors in $T$. That is, $\cone{v}{T}$ is a convex cone with apex $v$ that contains $T$ and has complexity $O(\delta(v))$, where $\delta(v)$ denotes the number of neighbors of $v$ in $T$. We say that $\cone{v}{T}$ \emph{separates} $T$ from another polyhedron if the latter does not intersect the interior of $\cone{v}{T}$.

\subsection*{The algorithm}

To begin the algorithm, let $i = k$ and  $j = l$, i.e., we start with $P_i^* = P_i$ and $Q_j^* = Q_j$ being both tetrahedra.
Notice that for the base case, $i = k$ and $j = l$, we can determine in $O(1)$ time if $P_i$ and $\polarzero{Q_j} = \polarzero{Q_j^*}$ intersect. 
If they do not, then we can compute a plane separating them and establish the separation invariant. 
Otherwise, if $P_i$ and $\polarzero{Q_j}$ intersect, then by Theorem~\ref{theorem: Polarity of polyhedra} we know that $\polarinf{P_i} = \polarinf{P_i^*}$ does not intersect $Q_j$.
Thus, in constant time we can compute a plane tangent to $Q_j$ in the polar space that separates $\polarinf{P_i} = \polarinf{P_i^*}$ from $Q_j$. That is, we can establish the inverse separation invariant. Thus, at the beginning of the algorithm the star invariant holds trivially, and either the separation invariant or the inverse separation invariant holds (maybe both if $P_i$ and $\polarzero{Q_j}$ are tangent).

After each round of the algorithm, we advance in at least one of the hierarchies of $P$ and $Q$ while maintaining the star invariant. Moreover, we maintain at least one among the separation and the inverse separation invariants. Depending on which invariant is maintained, we step into the primal or the polar space as follows (if both invariants hold, we choose arbitrarily).

\subsection*{A walk in the primal space.} 
We step into this case if the separation invariant holds.
That is, $P_i$ is separated from $\polarzero{Q_j^*}$ by a plane $\pi$ tangent to $P_i$ at a vertex $v$. 

We know by Lemma~\ref{lemma:halfspace property} that there is at most one vertex $p$ in $P_{i-1}$ that lies in $\planezero{\pi}\setminus \pi$.
Moreover, this vertex must be a neighbor of $v$ in $P_{i-1}$. Because $P_{i-1}$ is \bounded, we scan the $O(1)$ neighbors of $v$ and test if any of them lies in $\planezero{\pi}\setminus \pi$. Two cases arise:

\vspace{.1in}
\textbf{Case 1.} If $P_{i-1}$ is contained in $\planeinf{\pi}$, then $\pi$ still separates $P_{i-1}$ from $\polarzero{Q_j^*}$ while being tangent to the same vertex $v$ of $P_{i-1}$. Therefore, we have moved down one level in the hierarchy of $P$ while maintaining the separation invariant. 

To maintain the star invariant, let $P_{i-1}^* = \neighbors{v}{P_{i-1}}$ and let $\pointP = v\in V(P_{i-1}^*)\cap V(P_{i-1})$. 
Because $P_{i-1}$ is $c$-bounded, we know that $P_{i-1}^*$ has constant size. 
Since $\polarzero{Q_j^*}$ has constant size, we can compute the plane $\sepP$ parallel to $\pi$ and tangent to $\polarzero{Q_j^*}$ in $O(1)$ time, i.e.,
$\sepP$ also separates $P_{i-1}$ from $\polarzero{Q_j^*}$.
Because $\polarzero{Q_j^*}\supseteq \polarzero{Q_j}$ by Lemma~\ref{lemma:Inverse of inclusion} and from the fact that 
$P_{i-1}\setminus \pointP\subset P_{i-1}$, we conclude that (1) $\sepP$ separates $P_{i-1}\setminus \pointP$ from $\polarzero{Q_j}$. Moreover, because $\polar{\sepP}\in Q_j^*$ by Lemma~\ref{lemma:A tangent polarizes to a point inside} and from the fact that $Q_j^*\subseteq Q_j$, we conclude that (2) $\polar{\sepP}\in Q_j$. Thus, the star invariant is maintained in the primal space.

In the polar space, if $j< l$, then since $\polarinf{P_{i-1}}\subseteq \polarinf{P_i}$ by Lemma~\ref{lemma:Inverse of inclusion}, (1) the plane $\sepQ$ that separates $Q_j\setminus \pointQ$ from $\polarinf{P_i}$ also separates $Q_j\setminus \pointQ$ from $\polarinf{P_{i-1}}$. 
Moreover, because $P_i\subseteq P_{i-1}$ and from the fact that $\polar{\sepQ}\in P_i$, we conclude that (2) $\polar{\sepQ}\in P_{i-1}$.
Thus, the star invariant is also maintained in the polar space and we proceed with a new round of the algorithm in the primal~space.

\vspace{.1in}
\textbf{Case 2.} If $P_{i-1}$ crosses $\pi$, then by Lemma~\ref{lemma:halfspace property} there is a unique vertex $p$ of $P_{i-1}$ that lies in~$\planezero{\pi}\setminus \pi$. 
To maintain the star invariant, let $P_{i-1}^* = \neighbors{p}{P_{i-1}}$ and let $\pointP = p$. Then, proceed as in to the first case. 
In this way, we maintain the star invariant in both the primal and the polar space.

Recall that $\cone{\pointP}{P_{i-1}}$ is the cone being the convex hull of the set of rays shooting from $\pointP$ to each of its neighbors in $P_{i-1}$.
Since $P_{i-1}$ is $c$-bounded, $\pointP$ has at most $c$ neighbors in $P_{i-1}$. 
Thus, both $\cone{\pointP}{P_{i-1}}$ and $\polarzero{Q_j^*}$ have constant complexity and we can test if they intersect in constant time.
Two cases arise: 

\vspace{.1in}
\textbf{Case 2.1.} If $\cone{\pointP}{P_{i-1}}$ and $\polarzero{Q_j^*}$ do not intersect, then as $P_{i-1}\subseteq \cone{\pointP}{P_{i-1}}$, we can compute in constant time a plane $\pi'$ tangent to $\cone{\pointP}{P_{i-1}}$ at $\pointP$ that separates $P_{i-1}\subseteq \cone{\pointP}{P_{i-1}}$ from $\polarzero{Q_j^*}$.
That is, we reestablish the separation invariant and proceed with a new round in the primal space.\vspace{.1in}

\textbf{Case 2.2.} Otherwise, if $\cone{\pointP}{P_{i-1}}$ and $\polarzero{Q_j^*}$ intersect, then because $P_{i-1}\setminus \pointP \subseteq \planeinf{\pi}$ and $\polarzero{Q_j^*}\subseteq \planezero{\pi}$, we know that this intersection happens at a point of $P_{i-1}^*$, i.e., $P_{i-1}^*$ intersects $\polarzero{Q_j^*}$.
Therefore, by Theorem~\ref{theorem: Polarity of polyhedra} there is a plane $\mu'$ that separates $\polarinf{P_{i-1}^*}$ from $Q_j^*$ in the polar space. In this case, we would like to establish the inverse separation invariant which states that $\polarinf{P_{i-1}^*}$ is separated from $Q_j$.
Note that if $j = l$, then $Q_j = Q_j^*$ and the inverse separation invariant is established. 
Therefore, assume that $j<l$ and recall that $\pointQ\in V(Q_j^*)\cap V(Q_j)$.

By the star invariant and from the assumption that $j<l$, the plane $\sepQ$ separates $Q_j\setminus \pointQ$ from $\polarinf{P_{i-1}}$, i.e., $Q_j\setminus \pointQ\subseteq \planezero{\sepQ}$.
In this case, we enlarge $P_{i-1}^*$ by adding the vertex $\polar{\sepQ}$ to it, i.e., we let $P_{i-1}^* = \ch{\neighbors{p}{P_{i-1}}\cup\{ \polar{\sepQ}\}}$. 
Note that this enlargement preserves the star invariant as $\pointP$ is still a vertex of the refined $P_{i-1}^*$. Moreover, because $\polar{\sepQ}\in P_{i-1}$ by the star invariant, we know that $P_{i-1}^*\subseteq P_{i-1}$.

Because $\polar{\sepQ}\in P_{i-1}^*$, Lemma~\ref{lemma:Result for Shell-simplices} implies that $\polarinf{P_{i-1}^*}\subseteq \planeinf{\sepQ}$. 
Since $Q_j\setminus \pointQ\subseteq \planezero{\sepQ}$, $\sepQ$ separates $\polarinf{P_{i-1}^*}$ from $Q_j\setminus \pointQ$. 
Because $\mu'$ separates $\polarinf{P_{i-1}^*}$ from~$Q_j^*\supseteq \neighbors{\pointQ}{Q_j}$, we conclude that there is a plane that separates $\polarinf{P_{i-1}^*}$ from $Q_j$ and it only remains to compute~it in $O(1)$ time. 

In fact, because $Q_j\setminus \pointQ\subseteq \planezero{\sepQ}$, all neighbors of $\pointQ$ in $Q_j$ lie in $\planezero{\sepQ}$ and hence, the cone $\cone{\pointQ}{Q_j}$ does not intersect $\polarinf{P_{i-1}^*}$.
Since $\cone{\pointQ}{Q_j}$ and $\polarinf{P_{i-1}^*}$ have constant complexity, we can compute a plane $\mu$ tangent to $\cone{\pointQ}{Q_j}$ at $\pointQ$ such that $\mu$ separates $\cone{\pointQ}{Q_j}$ from $\polarinf{P_{i-1}^*}$. 
Because $Q_j\subseteq \cone{\pointQ}{Q_j}$, $\mu$ separates $Q_j$ from $\polarinf{P_{i-1}^*}$ while being tangent to $Q_j$ at $\pointQ$.
That is, we establish the inverse separation invariant.
In this case, we step into the polar space and try to move down in the hierarchy of $Q$ in the next round of the algorithm.

\subsection*{A walk in the polar space.}
We step into this case if the inverse separation invariant holds. That is, we have a plane tangent to $Q_j$ at one of its vertices that separates $\polarinf{P_i^*}$ from $Q_j$.
In this case, we perform an analogous procedure to that described for the case when the separation invariant holds. However, all instances of $P_i$ (\emph{resp.} $P$) are replaced by $Q_j$ (\emph{resp.} $Q$) and vice versa, and all instances of $\polarinf{*}$ are replaced by $\polarzero{*}$ and vice versa. Moreover, all instances of the separation and the inverse separation invariant are also swapped. At the end of this procedure, we decrease the value of $j$ and establish either the separation or the inverse separation invariant. Moreover, the star invariant is also preserved should there be a subsequent round of the algorithm.

\subsection*{Analysis of the algorithm}

After going back and forth between the primal and the polar space, we reach the bottom of the hierarchy of either $P$ or $Q$. 
Thus, we might reach a situation in which we analyze $P_1$ and~$\polarzero{Q_j^*}$~in the primal space for some $1\leq j\leq l$. In this case, if the separation invariant holds, then we have computed a plane $\pi$ that separates $P_1$ from $\polarzero{Q_j^*}\supseteq \polarzero{Q}$. 
Because $P = P_1$, we conclude that $\pi$ separates $P$ from $R = \polarzero{Q}$.

We may also reach a situation in which we test $Q_1$ and $\polarinf{P_i^*}$ in the polar space for some $1\leq i\leq k$. 
In this case, if the inverse separation invariant holds, then we have a plane $\mu$ that separates~$Q_1$ from $\polarinf{P_i^*}$. 
Since $\polarinf{P_i^*}$ has constant complexity, we can assume that $\mu$ is tangent to $\polarinf{P_i^*}$ as we can compute a plane parallel to $\mu$ with this property.
Because $Q = Q_1$, we conclude that $\mu$ is a plane that separates $Q$ from $\polarinf{P_i^*}$ such that $\mu$ is tangent to $\polarinf{P_i^*}$.
Therefore, Theorem~\ref{theorem: Polarity of polyhedra} implies that $\polar{\mu}$ is a point in the intersection of $P_i^*\subseteq P$ and $\polarzero{Q}$, i.e., $P$ and $R = \polarzero{Q}$ intersect.

In any other situation the algorithm can continue until one of the two previously mentioned cases arises and the algorithm finishes. 
Because we advance in each round in either the \hier of $P$ or the \hier of $Q$, after $O(\log n + \log m)$ rounds the algorithm finishes. Because each round is performed in $O(1)$ time, we obtain the following result.

\begin{theorem}
Let $P$ and $R$ be two independently preprocessed polyhedra in $\mathbb{R}^3$ with combinatorial complexities $n$ and $m$, respectively. 
For any given translations and rotations of $P$ and~$R$, we can determine if $P$ and~$R$ intersect in $O(\log n + \log m)$ time.
\end{theorem}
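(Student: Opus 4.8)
The plan is to assemble the final theorem directly from the machinery developed in the preceding sections: the polar transformation (Theorem~\ref{theorem: Polarity of polyhedra}), the bounded hierarchy (the $B$-properties lemma and Lemma~\ref{lemma:halfspace property}), and the round-by-round invariants. First I would fix the setup: by translating we may assume the origin lies in the interior of $R$, so $Q = \polarzero{R}$ is a bounded polyhedron with $R = \polarzero{Q}$ by Lemma~\ref{lemma:Polar of the polar}. In the preprocessing phase we build, for each polyhedron, a \hier (of complexity $O(n)$, resp.\ $O(m)$, and depth $O(\log n)$, resp.\ $O(\log m)$) — one for $P$ and one for $Q = \polarzero{R}$ — so that all the $B$-properties hold; in particular each polyhedron in each hierarchy is \bounded, which is exactly what makes the per-round work constant. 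Since any translation/rotation of $P$ together with any translation/rotation of $R$ is, after a rigid change of coordinates, equivalent to moving $P$ alone with $R$ fixed, the query reduces to: decide whether $P$ and $\polarzero{Q}$ intersect, with the origin pinned inside $\polarzero{Q}$.

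Next I would argue correctness of the invariants. The algorithm maintains indices $i \le k$, $j \le l$ into the two hierarchies, constant-size polyhedra $P_i^* \subseteq P_i$ and $Q_j^* \subseteq Q_j$, the distinguished vertices $\pointP, \pointQ$ and planes $\sepP, \sepQ$, together with the \emph{star invariant} (which ties $\sepP$ to a separation of $P_i\setminus\pointP$ from $\polarzero{Q_j}$ and places $\polar{\sepP}\in Q_j$, and symmetrically in the polar space) and one of the \emph{separation} / \emph{inverse separation} invariants. I would verify: (a) the base case $i=k$, $j=l$ — two tetrahedra and $\polarzero{Q_j^*}=\polarzero{Q_j}$ — is decided in $O(1)$, establishing the star invariant vacuously and one of the two separation invariants, appealing to Theorem~\ref{theorem: Polarity of polyhedra} to convert an intersection of $P_i$ and $\polarzero{Q_j}$ into a tangent separating plane for $\polarinf{P_i^*}$ and $Q_j$; (b) each primal round (Cases~1, 2, 2.1, 2.2) decrements $i$ while re-establishing the star invariant and one of the separation invariants, using Lemma~\ref{lemma:halfspace property} to locate the unique candidate vertex $p$ of $P_{i-1}$ on the wrong side of $\pi$ among the $O(1)$ neighbors of $v$, Lemma~\ref{lemma:Inverse of inclusion} for the monotonicity of polars under inclusion, Lemma~\ref{lemma:Result for Shell-simplices} / Lemma~\ref{lemma:A tangent polarizes to a point inside} to track boundary incidences, and the cone $\cone{\pointP}{P_{i-1}}$ (constant complexity, containing $P_{i-1}$) as the constant-size separator; (c) the polar round is the verbatim symmetric statement via the ``moreover'' clause of Theorem~\ref{theorem: Polarity of polyhedra}. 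Each step touches only constant-complexity objects, so it runs in $O(1)$ time in the real RAM model.

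Then I would handle termination and output. Since every round strictly decreases $i+j$, after $O(\log n + \log m)$ rounds we hit the bottom of one hierarchy. If we are examining $P_1$ against $\polarzero{Q_j^*}$ with the separation invariant, the separating plane $\pi$ separates $P = P_1$ from $\polarzero{Q_j^*}\supseteq\polarzero{Q} = R$, so $P\cap R=\emptyset$. If instead we are examining $Q_1$ against $\polarinf{P_i^*}$ with the inverse separation invariant, then (after making the plane tangent, which is free since $\polarinf{P_i^*}$ has constant complexity) Theorem~\ref{theorem: Polarity of polyhedra} yields a point $\polar{\mu}\in P_i^*\cap\polarzero{Q} \subseteq P\cap R$, so they intersect. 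In every intermediate configuration exactly one of the two separation invariants holds (possibly both, in the tangency case), so the algorithm always proceeds. Combining: $O(\log n+\log m)$ rounds at $O(1)$ each, after $O(n)$ preprocessing time and space per polyhedron for the hierarchies.

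The main obstacle, and the part I would be most careful about, is Case~2.2 of the primal walk — the situation the Dobkin--Kirkpatrick paper overlooked. Here $\pointP$ can a priori have large degree unless we have forced bounded degree (this is exactly why the \hier construction of Section~\ref{sec:bounded-hierarchies} is needed: $B3$ replaces $A3$), and, more subtly, $P_{i-1}^*$ must be \emph{refined} by inserting the polar $\polar{\sepQ}$ of the stored plane $\sepQ$ so that the cone $\cone{\pointQ}{Q_j}$ can be shown disjoint from $\polarinf{P_{i-1}^*}$; the delicate point is checking that this refinement keeps $P_{i-1}^*$ of constant size, keeps $P_{i-1}^*\subseteq P_{i-1}$ (using $\polar{\sepQ}\in P_{i-1}$ from the star invariant), preserves the star invariant (since $\pointP$ is still a vertex), and actually produces — via $\mu'$ separating $\polarinf{P_{i-1}^*}$ from $Q_j^*\supseteq\neighbors{\pointQ}{Q_j}$ and $\sepQ$ separating it from $Q_j\setminus\pointQ$ — a single plane $\mu$ tangent to $Q_j$ at $\pointQ$ separating $\polarinf{P_{i-1}^*}$ from all of $Q_j$, thereby establishing the inverse separation invariant so the next round can legitimately descend in the hierarchy of $Q$. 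Once that bookkeeping is nailed down, the rest is a routine inductive application of the polarity lemmas and the hierarchy properties.
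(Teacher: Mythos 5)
Your proposal is correct and follows essentially the same route as the paper: the same preprocessing with two \hiers (one for $P$, one for $\polarzero{R}$), the same star/separation/inverse-separation invariants, the same case analysis for the primal and polar walks (including the refinement of $P_{i-1}^*$ by $\polar{\sepQ}$ in Case~2.2, which is indeed the delicate point), and the same termination argument via Theorem~\ref{theorem: Polarity of polyhedra}. No substantive differences to report.
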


\section{Detecting intersections in higher dimensions}\label{sec:high-d}

In this section, we extend our algorithm to any constant dimension $d$ at the expense of increasing the space to $O(n^{\lfloor d/2\rfloor + \delta})$ for any $\delta>0$. 
To do that, we replace the \hier and introduce a new hierarchy produced by recursively taking $\varepsilon$-nets of the faces of the polyhedron. Our objective is to obtain a new hierarchy with logarithmic depth with properties
similar to those described in Lemma~\ref{lemma:halfspace property}. For the latter, we use the following definition.

Given a polyhedron $P$, the intersection of $(d+1)$ halfspaces is a \emph{shell-simplex} of $P$ if it contains $P$ and each of these $(d+1)$ halfspaces is supported by a $(d-1)$-dimensional face of $P$.

\begin{lemma}\label{lemma:Simplex lemma}
Let $P$ be a polyhedron in $\mathbb{R}^d$ with $k$ vertices. We can compute a set $\Sigma(P)$ of at most $O(k^{\lfloor d/2\rfloor})$ shell-simplices of~$P$ such that given a hyperplane $\pi$ tangent to $P$, there is a shell-simplex $\sigma\in \Sigma(P)$ such that $\pi$ is also tangent to $\sigma$.
\end{lemma}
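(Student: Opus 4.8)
The plan is to pass to the polar and reduce the statement to a triangulation estimate for convex polytopes. As elsewhere in the paper, assume that $\zero$ lies in the interior of $P$, and let $P^*:=\polarzero{P}$. Then $P^*=\bigcap_{v\in V(P)}\planezeropolar{v}$ is the intersection of $k$ halfspaces, hence a $d$-polytope with at most $k$ facets, and by Lemma~\ref{lemma:Polar of the polar} we also have $P=\polarzero{P^*}$; in particular, by polar duality the vertices of $P^*$ are exactly the polars $\polar{h}$ of the facet hyperplanes $h$ of $P$. Consequently, for any set $T\subseteq V(P^*)$ of $d+1$ vertices of $P^*$, the region $\sigma_T:=\bigcap_{t\in T}\planezeropolar{t}$ is a shell-simplex of $P$: its $d+1$ bounding hyperplanes $\polar{t}$ each contain a $(d-1)$-dimensional face of $P$, and $\sigma_T\supseteq\bigcap_{t\in V(P^*)}\planezeropolar{t}=\polarzero{P^*}=P$. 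Two facts then link tangency to these convex hulls. First, if $\pi$ is tangent to $P$ then, since $\zero\in P$ forces $P\subseteq\planezero{\pi}$, Lemma~\ref{lemma:Polarizations of polyhedra} gives $\polar{\pi}\in\polarzero{P}=P^*$. Second, if $\polar{\pi}\in\ch{T}$ then $\pi$ is tangent to $\sigma_T$: a standard linear-programming duality argument shows that $\sigma_T\subseteq\planezero{\pi}$ if and only if $\polar{\pi}\in\ch{T\cup\{\zero\}}$, which is implied by $\polar{\pi}\in\ch{T}$; and since $P\subseteq\sigma_T$ and $\pi\cap P\neq\emptyset$, the hyperplane $\pi$ also meets $\sigma_T$, hence supports it.

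Given these two facts, it suffices to produce a family $\mathcal{T}$ of $O(k^{\lfloor d/2\rfloor})$ sets of $d+1$ vertices of $P^*$ whose convex hulls cover $P^*$, and to set $\Sigma(P):=\{\sigma_T : T\in\mathcal{T}\}$. I would take $\mathcal{T}$ to be the collection of vertex sets of the full-dimensional simplices of a triangulation of $P^*$. Since $P^*$ is a $d$-polytope defined by $k$ halfspaces, the Upper Bound Theorem (see, e.g.,~\cite{ziegler1995lectures}) bounds all of its face numbers by $O(k^{\lfloor d/2\rfloor})$; the quantitative input I need is that $P^*$ admits a triangulation into $O(k^{\lfloor d/2\rfloor})$ simplices, using only the vertices of $P^*$ and computable in time polynomial in $k$ for fixed $d$ — for instance a bottom-vertex (pulling) triangulation, whose size is bounded by charging each of its simplices to a low-dimensional face of $P^*$ and appealing once more to the Upper Bound Theorem. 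The convex hulls of the sets in $\mathcal{T}$ then cover $P^*$, $|\Sigma(P)|=O(k^{\lfloor d/2\rfloor})$, and the whole construction runs in polynomial time.

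Correctness is immediate from the two facts: given $\pi$ tangent to $P$ we have $\polar{\pi}\in P^*=\bigcup_{T\in\mathcal{T}}\ch{T}$, so $\polar{\pi}\in\ch{T}$ for some $T\in\mathcal{T}$, and then $\pi$ is tangent to $\sigma_T\in\Sigma(P)$. The step I expect to be the main obstacle is the triangulation bound itself: triangulating the facets of $P^*$ one by one only gives $O(k^{\lceil d/2\rceil})$ simplices in odd dimensions, so reaching the sharp exponent $\lfloor d/2\rfloor$ requires a global triangulation scheme analysed through a careful charging argument. A minor additional point is that most of the regions $\sigma_T$ are unbounded — only the simplex of the triangulation that contains $\zero$ yields a bounded one — which is why the tangency criterion above is phrased via $\ch{T\cup\{\zero\}}$ instead of via $\polarzero{\sigma_T}$; one should also verify that each hyperplane $\polar{t}$ with $t\in V(P^*)$ contains a $(d-1)$-dimensional face of $P$ and not a face of smaller dimension.
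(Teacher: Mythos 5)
Your proposal is correct and follows essentially the same route as the paper's proof: polarize $P$, triangulate the polar polytope into $O(k^{\lfloor d/2\rfloor})$ simplices, dualize each simplex's vertex set into a shell-simplex, and use the fact that a tangent hyperplane of $P$ polarizes to a boundary point of $\polarzero{P}$ lying in some simplex. The only difference is cosmetic: the paper obtains the triangulation bound by citing Lemma~3.8 of~\cite{clarkson1988randomized}, whereas you derive it from a bottom-vertex triangulation plus the Upper Bound Theorem, and your explicit tangency argument (via $P\subseteq\sigma_T$ and $\pi\cap P\neq\emptyset$) is a slightly more careful version of the paper's one-line claim.
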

\begin{proof}
Without loss of generality assume that $\zero\in P$. 
Note that $\polarzero{P}$ has exactly $k$ $(d-1)$-dimensional faces.
Using Lemma 3.8 of~\cite{clarkson1988randomized} we infer that there exists a triangulation $T$ of $\polarzero{P}$ such that the combinatorial complexity of $T$ is $O(k^{\lfloor d/2\rfloor})$. That is, $T$ decomposes $\polarzero{P}$ into interior disjoint $d$-dimensional simplices.

Let $s$ be a simplex of $T$.
For each $v\in V(s)$, notice that since $v\in \polarzero{P}$, $P\subseteq \polarzero{v}$ by Lemma~\ref{lemma:Result for Shell-simplices}.
Therefore, $P\subseteq \cap_{v\in V(s)} \polarzero{v} = \polarzero{s}$, i.e., $\sigma_s = \polarzero{s}$ is a shell-simplex of $P$ obtained from polarizing $s$. 
Finally, let $\Sigma(P) = \{\sigma_s : s\in T\}$ and notice that~$|\Sigma(P)| = O(k^{\lfloor d/2\rfloor})$.

Because $\zero\in P$, Lemma~\ref{lemma:Polar of the polar} implies that $P = \polarzero{\polarzero{P}}$.
Let $\pi$ be a hyperplane tangent to $P = \polarzero{\polarzero{P}}$ and note that its polar is a point $\polar{\pi}$ lying on the boundary of $\polarzero{P}$ by Lemma~\ref{lemma:A tangent polarizes to a point inside}.
Hence, $\polar{\pi}$ lies on the boundary of a simplex~$s$ of~$T$. Thus, by Lemma~\ref{lemma:Result for Shell-simplices} 
we know that $\sigma_s\subseteq \planezero{\pi}$.
Because $\polar{\pi}$ lies on the boundary of $s$, $\pi$ is tangent to $\sigma_s$ yielding our result.
\end{proof}

\subsection*{Hierarchical trees}

Let $P$ be a polyhedron  with combinatorial complexity $n$. 
We can assume that the vertices of $P$ are in general position (i.e., no $d+1$ vertices lie on the same hyperplane) using Simulation of Simplicity~\cite{edelsbrunner1990simulation}.

Let $F(P)$ be the set of all faces of $P$.
Consider the family $G$ such that a set $g\in G$ is the complement of the intersection of $d+1$ halfspaces.
Let $F_{g} = \{f\in F(P) : f\cap g\neq \emptyset\}$ be the set of faces of $P$ induced by $g$.
Let $G_{F(P)} = \{F_g : g \in G\}$ be the family of subsets of $F(P)$ induced by $G$.

To compute the hierarchy of $P$, let $0< \varepsilon < 1$ and consider the range space defined by $F(P)$ and $G_{F(P)}$.
Since the VC-dimension of this range space is finite, we can compute an $\varepsilon$-net $N$ of $(F(P), G_{F(P)})$ of size $O(\frac{1}{\varepsilon} \log \frac{1}{\varepsilon}) = O(1)$~\cite{ConstructionEpsilonNets}. 
Because the vertices of $P$ are in general position, each face of $P$ has at most $d+1$ vertices. Therefore, $\ch{N}$ has $O(|N|)$ vertices, i.e., $\ch{N}$ has constant complexity.
By Lemma~\ref{lemma:Simplex lemma} and since $|N| = O(1)$, we can compute the set $\Sigma(\ch{N})$ having $O(|N|^{\lfloor d/2\rfloor})$ shell-simplices of $\ch{N}$ in constant time.

Given a shell-simplex $\sigma\in \Sigma(\ch{N})$, let $\bar{\sigma}\in  G$ be the complement of $\sigma$. 
Because $\ch{N}\subseteq \sigma$, $\bar{\sigma}$ intersects no face of $N$. 
Recall that $F_{\bar{\sigma}} = \{f\in F(P) : f\cap \bar{\sigma}\neq \emptyset\}$.
Therefore, since $N$ is an $\varepsilon$-net of $(F(P), G_{F(P)})$, we conclude that  $F_{\bar{\sigma}}$ contains at most $\varepsilon |F(P)|$ faces of $P$. 

We construct the \emph{hierarchical tree} of a polyhedron $P$ recursively.  In each recursive step, we consider a subset $F$ of the faces of $P$ as input.
As a starting point, let $F = F(P)$.
The recursive construction considers two cases: (1) If $F$ consists of a constant number of faces, then its tree consists of a unique node storing a reference to $\ch{F}$. (2) Otherwise, compute the $\varepsilon$-net $N$ of $F$ as described above and store $\ch{N}$ together with $\Sigma(\ch{N})$ at the root node.
Then, for each shell-simplex $\sigma\in \Sigma(\ch{N})$ construct recursively the tree for $F_{\bar{\sigma}}$ and attach it to the root node.
Because the size of the $\varepsilon$-net is independent of the size of the polyhedron, we obtain a hierarchical structure being a tree rooted at~$\ch{N}$ with maximum degree $O(|N|^{\lfloor d/2\rfloor})$.

\begin{lemma}\label{lemma:Hierarchical Construction in R^d}
Given a polyhedron $P$ in $\mathbb{R}^d$ with combinatorial complexity $n$ and any $\delta>0$, we can compute a hierarchical tree for $P$ with $O(\log  n)$ depth in $O(n^{\lfloor d/2\rfloor+\delta})$ time using $O(n^{\lfloor d/2\rfloor+\delta})$~space. 
\end{lemma}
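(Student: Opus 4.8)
The plan is to analyze the recursive construction of the hierarchical tree directly, bounding depth, time, and space by separate recurrences. First I would establish the depth bound. By the $\varepsilon$-net argument preceding the statement, at each internal node with input face set $F$, every child receives a face set $F_{\bar\sigma}$ of size at most $\varepsilon|F|$ for the fixed constant $\varepsilon<1$. Hence along any root-to-leaf path the number of faces shrinks geometrically, from $n$ down to $O(1)$, so the depth is $O(\log_{1/\varepsilon} n) = O(\log n)$, since $\varepsilon$ is a constant chosen at the start.

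Next I would bound the work done at a single node and the total. At a node with $|F|$ faces we: (i) compute an $\varepsilon$-net $N$ of $(F, G_{F})$ of size $O(1)$, which takes time polynomial in $|F|$ (in fact near-linear using the standard $\varepsilon$-net constructions of~\cite{ConstructionEpsilonNets}); (ii) compute $\ch{N}$, which is $O(1)$ work since $|N|=O(1)$; (iii) compute $\Sigma(\ch{N})$ via Lemma~\ref{lemma:Simplex lemma}, again $O(1)$ work; and (iv) for each of the $O(|N|^{\lfloor d/2\rfloor}) = O(1)$ shell-simplices $\sigma$, determine the induced subset $F_{\bar\sigma} = \{f\in F : f\cap\bar\sigma\neq\emptyset\}$, which costs $O(|F|)$ time by testing each face against the $O(1)$-complexity region $\bar\sigma$. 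So the work at a node is dominated by the $\varepsilon$-net computation and the face-filtering, both bounded by $|F|^{1+o(1)}$; I will simply write $O(|F|^{1+\delta'})$ for a suitable $\delta'$ depending on the $\varepsilon$-net algorithm used.

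I would then set up the size/time recurrences. Let the branching factor be $r = O(|N|^{\lfloor d/2\rfloor})$, a constant, and note each child's face set has size $\leq\varepsilon|F|$. If $S(f)$ is the total size (number of nodes, hence space) of the tree on an input of $f$ faces, then $S(f) \leq r\, S(\varepsilon f) + O(1)$, which solves to $S(n) = O(n^{\log_{1/\varepsilon} r})$. The crux is that $\varepsilon$ is a free parameter: by choosing $\varepsilon$ small enough (independently of $n$), we can drive $\log_{1/\varepsilon} r$ below $\lfloor d/2\rfloor + \delta$ for any prescribed $\delta>0$, because $r$ is a fixed polynomial $O((\frac1\varepsilon\log\frac1\varepsilon)^{\lfloor d/2\rfloor})$ in $\frac1\varepsilon$, so $\log_{1/\varepsilon} r = \lfloor d/2\rfloor + \frac{\lfloor d/2\rfloor\log\log(1/\varepsilon) + O(1)}{\log(1/\varepsilon)} \to \lfloor d/2\rfloor$ as $\varepsilon\to 0$. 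This gives the $O(n^{\lfloor d/2\rfloor+\delta})$ space bound, and the analogous recurrence $T(f) \leq r\,T(\varepsilon f) + O(f^{1+\delta'})$ gives the same bound for construction time (absorbing $\delta'$ into $\delta$), since the leaf-count term dominates the $f^{1+\delta'}$ term when the exponent exceeds $1$.

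The main obstacle I anticipate is making the parameter bookkeeping airtight: one must fix $\varepsilon$ as a constant depending only on $d$ and the target $\delta$ \emph{before} the recursion begins, verify that the VC-dimension of $(F(P), G_{F(P)})$ is indeed finite and bounded in terms of $d$ alone (so the $\varepsilon$-net size $O(\frac1\varepsilon\log\frac1\varepsilon)$ is genuinely independent of $P$), and check the base of the recursion solves cleanly — the recurrence $S(f)\leq r\,S(\varepsilon f)+O(1)$ only yields the stated polynomial bound when $\log_{1/\varepsilon}r > 1$, which holds once $\varepsilon$ is small, and otherwise $S(n)=O(n^{1+\delta})$ still suffices. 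A minor subtlety is confirming that the faces of $P$ stay in general position throughout (inherited from the one-time Simulation of Simplicity applied to $P$), so that each face has at most $d+1$ vertices and $\ch{N}$ genuinely has $O(|N|)$ vertices, keeping the per-node cost constant as claimed.
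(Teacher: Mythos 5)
Your proposal is correct and follows essentially the same route as the paper: the same depth argument from the geometric shrinkage of $|F|$, and the same recurrences $S(n) = O(|N|^{\lfloor d/2\rfloor})S(\varepsilon n) + O(1)$ and its time analogue, resolved by letting $\varepsilon\to 0$ so that the exponent $\log_{1/\varepsilon} O\bigl((\tfrac1\varepsilon\log\tfrac1\varepsilon)^{\lfloor d/2\rfloor}\bigr)$ tends to $\lfloor d/2\rfloor$. Your extra care about the per-node cost of filtering $F_{\bar\sigma}$ and about the low-exponent regime of the recurrence only tightens what the paper states more tersely.
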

\begin{proof}
Because we reduce the number of faces of the original polyhedron by a factor of~$\varepsilon$ on each branching of the hierarchical tree, the depth of this tree is $O(\log n)$. 

The space $S(n)$ of this hierarchical tree of $P$ can be described by the following recurrence 
$S(n) = O(|N|^{\lfloor d/2\rfloor}) S(\varepsilon n) + O(1).$
Recall that $|N| = O(\frac{1}{\varepsilon} \log \frac{1}{\varepsilon})$. Moreover, if we let $r = 1/\varepsilon$, we can solve this recurrence using the master theorem and obtain that $S(n) =  O(n^{\frac{\lfloor d/2\rfloor \log (C  r \log r)}{\log r}})$ for some constant $C>0$. Therefore, by choosing $r = 1/\varepsilon$ sufficiently large, we obtain that the total space is $S(n) = O(n^{\lfloor d/2\rfloor + \delta})$ for any $\delta>0$ arbitrarily small. 
To analyze the time needed to construct this hierarchical tree, recall that an $\varepsilon$-net can be computed in linear time~\cite{ConstructionEpsilonNets} which leads to the following recurrence $T(n) = O(|N|^{\lfloor d/2\rfloor}) S(\varepsilon n) + O(n)$. Using the same arguments as for the space we solve this recurrence and obtain that the total time is $T(n) = O(n^{\lfloor d/2\rfloor + \delta})$ for any $\delta>0$ arbitrarily small. 
\end{proof}

\subsection*{Testing intersection in higher dimensions}

Using hierarchical trees, we extend the ideas used for the 3D algorithm presented in Section~\ref{section:3D algorithm} to higher dimensions.
We start by describing the preprocessing of a polyhedron.

\subsection*{Preprocessing}
Let $P$ be a polyhedron $\mathbb{R}^d$ with combinatorial complexity $n$. Assume without loss of generality that the origin lies in the interior of $P$.
Otherwise, modify the coordinate system. 
To preprocess $P$, we first compute the polyhedron $\polarzero{P}$ being the polarization of $P$. 
Then, we compute two hierarchical trees as described in the previous section, one for $P$ and another for $\polarzero{P}$.
Similarly to the 3D case, because only one of the two hierarchical trees will ever be used in a single intersection query, we can independently use Simulation of Simplicity~\cite{edelsbrunner1990simulation} in the construction of each of the trees.
Because $|F(\polarzero{P})| =|F(P)|  = n$ by Corollary 2.14 of~\cite{ziegler1995lectures}, the total size of these hierarchical trees is $O(n^{\lfloor d/2\rfloor + \delta})$.

\subsection*{Preliminaries of the algorithm}
Let $P$ and $R$ be two independently preprocessed polyhedra in $\mathbb{R}^d$ with combinatorial complexities $n$ and $m$, respectively.
Throughout this algorithm, we fix the coordinate system used in the preprocessing of $R$, i.e., we assume that $\zero\in R$.
For ease of notation, let $Q = \polarzero{R}$. Because $\zero\in R$, Lemma~\ref{lemma:Polar of the polar} implies that $R = \polarzero{Q}$.
Assume that $P$ and $\polarzero{Q}$ lie in a \emph{primal space} while $\polarinf{P}$ and $Q$ lie in a \emph{polar space}. 
As in the 3D-algorithm, we look at the primal and polar spaces independently and switch between them whenever necessary.

To test the intersection of $P$ and $R = \polarzero{Q}$, we use the hierarchical trees of $P$ and $Q$ computed during the preprocessing step. 
The idea is to walk down these trees using paths going from the root to a leaf while maintaining some invariants.

Throughout the algorithm, we prune the faces of $P$ and keep only those that can define an intersection. 
Formally, we consider a set $F^*(P) \subseteq F(P)$  such that $P$ intersects $\polarzero{Q}$ if and only if a face of $F^*(P)$ intersects $\polarzero{Q}$. In the same way, we prune $F(Q)$ and maintain a set $F^*(Q)\subseteq F(Q)$ such that $Q$ intersects $\polarinf{P}$ if and only if a face of $F^*(Q)$ intersects $\polarinf{P}$. If these properties hold, we say that the \emph{correctness invariant} is maintained.

At the beginning of the algorithm let $F^*(P) = F(P)$ and $F^*(Q) = F(Q)$.
In each round of the algorithm we discard a constant fraction of the vertices of either $F^*(P)$ or $F^*(Q)$ while maintaining the correctness invariant. Note that these sets are not explicitly maintained.

Throughout, we consider constant size polyhedra $P_N\subseteq P$ and $Q_N\subseteq Q$ being the convex hull of $\varepsilon$-nets of $F^*(P)$ and $F^*(Q)$, respectively.
The algorithm tests if $P_N$ and $\polarzero{Q_N}$ intersect to determine either the separation or the inverse separation invariant, both analogous to those used by the 3D-algorithm. 
Formally, the \emph{separation invariant} states that we have a hyperplane $\pi$ that separates $P_N$ from $\polarzero{Q_N}$ such that $\pi$ is tangent to $P_N$ at one of its vertices.
The \emph{inverse separation invariant} states that there is a hyperplane $\mu$ that separates $\polarinf{P_N}$ from $Q_N$ such that $\mu$ is tangent to~$Q_N$ at one of its vertices. By Theorem~\ref{theorem: Polarity of polyhedra} at least one of the invariants must hold.

At the beginning of the algorithm, we let $P_N\subseteq P$ and $Q_N\subseteq Q$ be  the convex hulls of the $\varepsilon$-nets computed for $F(P)$ and $F(Q)$ at the root of their respective hierarchical trees. Because they have constant complexity, we can test if the separation or the inverse separation invariant holds.
Depending on which invariant is established, we step into the primal or the polar space as follows (if both invariants hold, we choose arbitrarily).

\subsection*{Separation invariant.} 

If the separation invariant holds, then we have a hyperplane $\pi$ tangent to $P_N$ at a vertex $v$ such that $\pi$ separates $P_N$ from $\polarzero{Q_N}$. 
Therefore, by Lemma~\ref{lemma:Simplex lemma} there is a simplex $\sigma\in \Sigma(P_N)$ such that $\pi$ is also tangent to $\sigma$ at $v$.
Because we stored $\Sigma(P_N)$ in the hierarchical tree, we go through the $O(1)$ shell-simplices of $\Sigma(P_N)$ to find $\sigma$. 
Recall that $F_{\bar{\sigma}}$ is the set of faces of $F^*(P)$ that intersect the complement of~$\sigma$. 
Thus, every face of $P$ intersecting the halfspace $\planezero{\pi}$ belongs to~$F_{\bar{\sigma}}$.

Because $\pi$ separates $P_N$ from $\polarzero{Q}\subseteq \polarzero{Q_N}\subseteq \planezero{\pi}$, the only faces of $F^*(P)$ that could define an intersection with~$\polarzero{Q}$ are those in $F_{\bar{\sigma}}$, i.e., a face of $F^*(P)$ intersects $\polarzero{Q}$ if and only if a face of $F_{\bar{\sigma}}$ intersects $\polarzero{Q}$.
Because the correctness invariant held prior to this step, we conclude that a face of $F_{\bar{\sigma}}$ intersects $\polarzero{Q}$ if and only if $P$ intersects $\polarzero{Q}$.

Recall that we have recursively constructed a tree for $F_{\bar{\sigma}}$ which hangs from the node storing $P_N$.
In particular, in the root of this tree we have stored the convex hull of an $\varepsilon$-net of $F_{\bar{\sigma}}$.
Therefore, after finding $\sigma$ in $O(1)$ time, we move down one level to the root of the tree of $F_{\bar{\sigma}}$. Then, we redefine $P_N$ to be the convex hull of the $\varepsilon$-net of $F_{\bar{\sigma}}$ stored in this node. 
Moreover, we let $F^*(P) = F_{\bar{\sigma}}$ which preserves the correctness invariant.
Then, we test if the new $P_N$ and $\polarzero{Q_N}$ intersect to determine if either the separation or inverse separation invariant holds. In this way, we moved down one level in the hierarchical tree of~$P$ and proceed with a new round of the algorithm.

\subsection*{Inverse separation invariant.} 
If the inverse separation invariant holds, then we have a hyperplane that separates $\polarinf{P_N}$ from $Q_N$. Applying an analogous procedure to the one described for the separation invariant, we redefine $Q_N$ and move down one level in the hierarchical tree of $Q$ while maintaining the correctness invariant. Then, we test if $\polarinf{P_N}$ intersects the new $Q_N$ to determine if either the separation or inverse separation invariants holds and proceed with the algorithm.
\vspace{.2in}

After $O(\log n + \log m)$ rounds, the algorithm reaches the bottom of the hierarchical tree of either $P$ or $Q$.
If we reach the bottom of the hierarchical tree of $P$ and the separation invariant holds, then because $\polarzero{Q_N}\supseteq \polarzero{Q}$ by Lemma~\ref{lemma:Inverse of inclusion},
 we have a hyperplane that separates $P_N = \ch{F^*(P)}$ from $\polarzero{Q_N}$. 
 That is, no face of $F^*(P)$ intersects $\polarzero{Q_N}$.
 Because $P$ and $\polarzero{Q}$ intersect if and only if a face of $F^*(P)$ intersects $\polarzero{Q}$ by the correctness invariant, we conclude that $P$ and $R = \polarzero{Q}$ do not intersect.

Analogously, if we reach the bottom of the hierarchical tree of $Q$ and the inverse separation invariant holds, then we have a hyperplane that separates $Q_N = \ch{F^*(Q)}$ from $\polarinf{P_N}\supseteq \polarinf{P}$. 
That is, no face of $F^*(Q)$ intersects $\polarinf{P}$.
Thus, by the correctness invariant, we conclude that $Q$ and $\polarinf{P}$ do not intersect. Therefore, Theorem~\ref{theorem: Polarity of polyhedra} implies that $P$ and $R = \polarzero{Q}$ intersect.

In any other situation the algorithm can continue until one of the two previously mentioned cases arises and the algorithm finishes. 
Recall that the hierarchical trees of $P$ and $Q$ have logarithmic depth by Lemma~\ref{lemma:Hierarchical Construction in R^d}.
Because in each round we move down in the hierarchical tree of either $P$ or $Q$, after $O(\log n + \log m)$ rounds the algorithm finishes. Moreover, since each round can be performed in $O(1)$ time, we obtain the following result.

\begin{theorem}
Let $P$ and $R$ be two independently preprocessed polyhedra in $\mathbb{R}^d$ with combinatorial complexities $n$ and $m$, respectively. For any given translations and rotations of $P$ and~$R$, we can determine if $P$ and $R$ intersect in $O(\log n + \log m)$ time.
\end{theorem}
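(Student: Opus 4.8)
The plan is to mirror the structure of the 3D algorithm from Section~\ref{section:3D algorithm}, replacing the two \hiers by the two hierarchical trees built in Lemma~\ref{lemma:Hierarchical Construction in R^d}, and replacing the single-vertex descent of Lemma~\ref{lemma:halfspace property} by the shell-simplex pruning of Lemma~\ref{lemma:Simplex lemma}. The description of preprocessing, of the primal/polar duality set-up ($Q=\polarzero{R}$, $R=\polarzero{Q}$ via Lemma~\ref{lemma:Polar of the polar}), and of the separation / inverse separation invariants has already been given above; what remains is essentially to (a) verify that the two invariants are exhaustive, (b) verify that each round is correct and runs in $O(1)$ time, and (c) bound the number of rounds. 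For (a) I invoke Theorem~\ref{theorem: Polarity of polyhedra}: since $P_N$ and $\polarzero{Q_N}$ are convex, either they intersect --- and then by Theorem~\ref{theorem: Polarity of polyhedra} there is a hyperplane tangent to $\polarinf{P_N}$ separating it from $Q_N$, giving the inverse separation invariant --- or they are disjoint, and a separating hyperplane tangent to $P_N$ gives the separation invariant. (One nudges the separating hyperplane to make it tangent to the relevant constant-size polytope, exactly as in the 3D analysis.)

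For (b), consider a round in which the separation invariant holds, with hyperplane $\pi$ tangent to $P_N$ at a vertex $v$. By Lemma~\ref{lemma:Simplex lemma} applied to the constant-complexity polytope $P_N$, the set $\Sigma(P_N)$ (precomputed and stored at this node of the tree) contains a shell-simplex $\sigma$ to which $\pi$ is also tangent; since $|\Sigma(P_N)|=O(1)$ we find $\sigma$ by brute force in $O(1)$ time. Because $\sigma\supseteq P_N$ and $\pi$ is tangent to $\sigma$, the halfspace $\planezero{\pi}$ is disjoint from the interior of $\sigma$, hence every face of $F^*(P)$ meeting $\planezero{\pi}$ lies in $F_{\bar\sigma}$. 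Since $\polarzero{Q}\subseteq\polarzero{Q_N}\subseteq\planezero{\pi}$ (the last inclusion by $\pi$'s tangency and Lemma~\ref{lemma:Inverse of inclusion}), the only faces of $F^*(P)$ that can meet $\polarzero{Q}$ are those in $F_{\bar\sigma}$; combined with the correctness invariant this shows $P$ meets $\polarzero{Q}$ iff some face of $F_{\bar\sigma}$ does. We then descend to the child of the current node corresponding to $\sigma$, reset $F^*(P):=F_{\bar\sigma}$ (preserving correctness) and $P_N$ to the stored $\varepsilon$-net hull of $F_{\bar\sigma}$, and re-test $P_N$ against $\polarzero{Q_N}$ to re-establish one of the two invariants. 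Each such step touches only constant-size objects, so it costs $O(1)$. The inverse-separation round is the mirror image with $P\leftrightarrow Q$ and $\planeinf{}\leftrightarrow\planezero{}$, using the symmetric clauses of Theorem~\ref{theorem: Polarity of polyhedra}.

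For (c) and the termination analysis: every round descends one level in the hierarchical tree of $P$ or of $Q$, and by Lemma~\ref{lemma:Hierarchical Construction in R^d} both trees have depth $O(\log n)$ and $O(\log m)$ respectively, so after $O(\log n+\log m)$ rounds we reach a leaf of one of the two trees. If we reach a leaf of $P$'s tree while the separation invariant holds, then $P_N=\ch{F^*(P)}$ is separated from $\polarzero{Q_N}\supseteq\polarzero{Q}$, so no face of $F^*(P)$ meets $\polarzero{Q}$; the correctness invariant then gives that $P$ and $R=\polarzero{Q}$ are disjoint. Symmetrically, reaching a leaf of $Q$'s tree with the inverse separation invariant means $Q=\ch{F^*(Q)}$ is separated from $\polarinf{P_N}\supseteq\polarinf{P}$, so $Q$ and $\polarinf{P}$ are disjoint, and Theorem~\ref{theorem: Polarity of polyhedra} forces $P$ and $R$ to intersect. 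In every other configuration the algorithm simply proceeds to the next round, so one of these two terminating cases must eventually occur. Since there are $O(\log n+\log m)$ rounds and each costs $O(1)$, the total query time is $O(\log n+\log m)$, which (together with the space bound $O(n^{\lfloor d/2\rfloor+\delta})$ from Lemma~\ref{lemma:Hierarchical Construction in R^d}) is the claimed result.

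The step I expect to be the main obstacle --- and the one deserving the most care --- is showing that re-establishing the invariants after a descent, and the handoff between the primal and polar walks, really only ever involves constant-complexity polytopes. In the 3D algorithm this required the extra ``cone'' construction and the trick of refining $P_{i-1}^*$ by adjoining the point $\polar{\sepQ}$ to transfer a separator from $Q_j^*$ to all of $Q_j$; the higher-dimensional version must make sure the analogous bookkeeping goes through with $\varepsilon$-net hulls in place of neighbor-hulls, i.e.\ that the separating hyperplane witnessed against the small polytope $Q_N$ can be promoted to a separating hyperplane against $Q$ using only the stored constant-size data. Everything else is a routine transcription of the 3D argument with Lemma~\ref{lemma:Simplex lemma} substituted for Lemma~\ref{lemma:halfspace property}.
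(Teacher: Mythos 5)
Your proposal matches the paper's proof essentially step for step: the same hierarchical-tree descent driven by the separation / inverse separation invariants, the same use of Lemma~\ref{lemma:Simplex lemma} to locate the shell-simplex $\sigma$ and prune to $F_{\bar\sigma}$, the same correctness invariant, and the same termination analysis via Theorem~\ref{theorem: Polarity of polyhedra}. The obstacle you flag at the end is in fact moot in this construction: unlike the 3D algorithm, $Q_N$ is a \emph{global} inner approximation of $Q$ (the hull of an $\varepsilon$-net) rather than a local neighborhood of a vertex, so $\polarzero{Q}\subseteq\polarzero{Q_N}$ from Lemma~\ref{lemma:Inverse of inclusion} already promotes every separator against $\polarzero{Q_N}$ to one against $\polarzero{Q}$, and no cone or star-invariant bookkeeping is needed.
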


Note that this algorithm does not construct a hyperplane that separates $P$
and $\polarzero{Q}$ or a common point, but only determines if
such a separating plane or intersection point exists.
In fact, if $P$ is disjoint from $\polarzero{Q}$, then we can take the $O(\log n)$ hyperplanes found by the algorithm, each of them separating some portion of $P$ from $\polarzero{Q}$. Because all these hyperplanes support a halfspace that contains $\polarzero{Q}$, their intersection defines a polyhedron $S$ that contains $\polarzero{Q}$ and excludes $P$. Therefore, we have a certificate of size $O(\log n)$ that guarantees that $P$ and $\polarzero{Q}$ are separated.

Similarly, if $Q$ is disjoint from $\polarinf{P}$, then we can find a polyhedron of size $O(\log m)$ whose boundary separates $Q$ from $\polarinf{P}$. In this case, we have a certificate that guarantees that $Q$ and $\polarinf{P}$ are disjoint which by Theorem~\ref{theorem: Polarity of polyhedra} implies that $P$ and $\polarzero{Q}$ intersect. 

\subsection*{Acknowledgments.} We thank David Kirkpatrick and anonymous referees for useful comments in a previous version of this paper.

{\small
\bibliographystyle{abbrv}
\bibliography{PolyhedraDistance}

\begin{thebibliography}{10}

\bibitem{biedl2005bounded}
T.~Biedl and D.~F. Wilkinson.
\newblock Bounded-degree independent sets in planar graphs.
\newblock {\em Theory of Computing Systems}, 38(3):253--278, 2005.

\bibitem{Chazelle92anoptimal}
B.~Chazelle.
\newblock An optimal algorithm for intersecting three-dimensional convex
  polyhedra.
\newblock {\em SIAM Journal on Computing}, 21:586--591, 1992.

\bibitem{ChazelleD80}
B.~Chazelle and D.~Dobkin.
\newblock Detection is easier than computation (extended abstract).
\newblock In {\em Proceedings of the 12th Annual ACM Symposium on Theory of
  Computing}, pages 146--153, 1980.

\bibitem{Chazelle1987}
B.~Chazelle and D.~Dobkin.
\newblock Intersection of convex objects in two and three dimensions.
\newblock {\em Journal of the ACM}, 34(1):1--27, Jan. 1987.

\bibitem{clarkson1988randomized}
K.~L. Clarkson.
\newblock A randomized algorithm for closest-point queries.
\newblock {\em SIAM Journal on Computing}, 17(4):830--847, 1988.

\bibitem{unimodal}
E.~Demaine and S.~Langerman.
\newblock Optimizing a {2D} function satisfying unimodality properties.
\newblock In {\em Proceedings of the 13th European Symposium on Algorithms (ESA
  2005)}, volume 3669 of {\em LNCS}, pages 887--898. Springer-Verlag, 2005.

\bibitem{dobkin1993computing}
D.~Dobkin, J.~Hershberger, D.~Kirkpatrick, and S.~Suri.
\newblock Computing the intersection-depth of polyhedra.
\newblock {\em Algorithmica}, 9(6):518--533, 1993.

\bibitem{dobkin1983fast}
D.~Dobkin and D.~Kirkpatrick.
\newblock Fast detection of polyhedral intersection.
\newblock {\em Theoretical Computer Science}, 27(3):241--253, 1983.

\bibitem{dobkin1985linear}
D.~Dobkin and D.~Kirkpatrick.
\newblock A linear algorithm for determining the separation of convex
  polyhedra.
\newblock {\em Journal of Algorithms}, 6(3):381--392, 1985.

\bibitem{dobkin1990determining}
D.~Dobkin and D.~Kirkpatrick.
\newblock Determining the separation of preprocessed polyhedra---a unified
  approach.
\newblock {\em Automata, Languages and Programming}, pages 400--413, 1990.

\bibitem{dobkin1991detecting}
D.~Dobkin and D.~Souvaine.
\newblock Detecting the intersection of convex objects in the plane.
\newblock {\em Computer aided geometric design}, 8(3):181--199, 1991.

\bibitem{edelsbrunner1985computing}
H.~Edelsbrunner.
\newblock Computing the extreme distances between two convex polygons.
\newblock {\em Journal of Algorithms}, 6(2):213--224, 1985.

\bibitem{edelsbrunner1990simulation}
H.~Edelsbrunner and E.~P. M{\"u}cke.
\newblock Simulation of simplicity: a technique to cope with degenerate cases
  in geometric algorithms.
\newblock {\em ACM Transactions on Graphics (TOG)}, 9(1):66--104, 1990.

\bibitem{erickson2000space}
J.~Erickson.
\newblock Space-time tradeoffs for emptiness queries.
\newblock {\em SIAM Journal on Computing}, 29(6):1968--1996, 2000.

\bibitem{CRCHandbook2004}
J.~Goodman and J.~O'Rourke, editors.
\newblock {\em Handbook of Discrete and Computational Geometry, Second
  Edition}.
\newblock CRC Press LLC, 2004.

\bibitem{jimenez20013d}
P.~Jim{\'e}nez, F.~Thomas, and C.~Torras.
\newblock 3{D} collision detection: a survey.
\newblock {\em Computers \& Graphics}, 25(2):269--285, 2001.

\bibitem{lin1998collision}
M.~Lin and S.~Gottschalk.
\newblock Collision detection between geometric models: A survey.
\newblock In {\em Proceedings of IMA Conference on Mathematics of Surfaces},
  volume~1, pages 602--608, 1998.

\bibitem{matouvsek1993ray}
J.~Matou{\v{s}}ek and O.~Schwarzkopf.
\newblock On ray shooting in convex polytopes.
\newblock {\em Discrete \& Computational Geometry}, 10(1):215--232, 1993.

\bibitem{ConstructionEpsilonNets}
J.~Matou\v{s}ek.
\newblock Construction of epsilon nets.
\newblock In {\em Proceedings of the 5th Annual Symposium on Computational
  Geometry}, pages 1--10, New York, 1989. ACM.

\bibitem{muller1978finding}
D.~E. Muller and F.~P. Preparata.
\newblock Finding the intersection of two convex polyhedra.
\newblock {\em Theoretical Computer Science}, 7(2):217--236, 1978.

\bibitem{o1998computational}
J.~O'Rourke.
\newblock {\em Computational geometry in C}.
\newblock Cambridge university press, 1998.

\bibitem{O'Rourke1982384}
J.~O'Rourke, C.-B. Chien, T.~Olson, and D.~Naddor.
\newblock A new linear algorithm for intersecting convex polygons.
\newblock {\em Computer Graphics and Image Processing}, 19(4):384 -- 391, 1982.

\bibitem{shamos1975geometric}
M.~I. Shamos.
\newblock Geometric complexity.
\newblock In {\em Proceedings of the 7th Annual ACM Symposium on Theory of
  Computing}, pages 224--233. ACM, 1975.

\bibitem{shamos1976geometric}
M.~I. Shamos and D.~Hoey.
\newblock Geometric intersection problems.
\newblock In {\em Proceedings of the 17th Annual Symposium on Foundations of
  Computer Science}, pages 208--215. IEEE, 1976.

\bibitem{ziegler1995lectures}
G.~M. Ziegler.
\newblock {\em Lectures on polytopes}, volume 152.
\newblock Springer, 1995.

\end{thebibliography}
}

\end{document}